   \tikzset{
   modal/.style={>=stealth,shorten >=1pt,shorten <=1pt,auto,node distance=1.5cm,
   semithick},
   world/.style={circle,draw,minimum size=0.5cm,fill=gray!15},
   point/.style={circle,draw,inner sep=0.5mm,fill=black},
   reflexive above/.style={->,loop,looseness=7,in=110,out=70},
   reflexive below/.style={->,loop,looseness=7,in=240,out=300},
   reflexive left/.style={->,loop,looseness=7,in=150,out=210},
   reflexive right/.style={->,loop,looseness=7,in=30,out=330}}
\newcommand{\David}[2][]{\todo[color=cyan, #1]{#2}}
\newcommand{\Chris}[2][]{\todo[color=violet, #1]{#2}}
\newcommand{\Marian}[2][]{\todo[color=green, #1]{#2}}
\newtheorem{theorem}{Theorem}
\newtheorem{axiom}{Axiom}
\title{How Haag-tied is QFT, really?}
\author{
\textsc{Chris Mitsch}\footnote{\href{mailto:chris.mitsch@pitt.edu}{chris.mitsch@pitt.edu}}\\ 
\textsc{Marian Gilton}\footnote{\href{mailto:marian.gilton@pitt.edu}{marian.gilton@pitt.edu}}\\
\textsc{David Freeborn}\footnote{\href{mailto:dfreebor@uci.edu}{dfreebor@uci.edu} }\\
}
\date{\today}
\begin{document}
\renewcommand{\arraystretch}{1.25}

\maketitle

\begin{abstract}
    Haag's theorem cries out for explanation and critical assessment: it sounds the alarm that something is (perhaps) not right in one of the standard way of constructing interacting fields to be used in generating predictions for scattering experiments.  Viewpoints as to the precise nature of the problem, the appropriate solution, and subsequently-called-for developments in areas of physics, mathematics, and philosophy differ widely. In this paper, we develop and deploy a conceptual framework for critically assessing these disparate responses to Haag's theorem. Doing so reveals the driving force of more general questions as to the nature and purpose of foundational work in physics. 
\end{abstract}
\thispagestyle{fancy}

\section{Introduction}

\begin{quote}
    The edifice of science is not raised like a dwelling, in which the foundations are first firmly laid and only then one proceeds to construct and to enlarge the rooms. Science prefers to secure as soon as possible comfortable spaces to wander around and only subsequently, when signs appear here and there that the loose foundations are not able to sustain the expansion of the rooms, it sets about supporting and fortifying them. This is not a weakness, but rather the right and healthy path of development.
\end{quote}
[Hilbert 1905, 102; translation by Corry 2004, 127]\\

\pagestyle{plain}

Proven over six decades ago, Haag's theorem appears to present a problem for particle physics. The theorem seems to block a key technique\textemdash namely, the interaction picture and its attendant calculational methods\textemdash that has been widely used to generate successful predictions. It is clear that the theorem points to some sort of problem, driven by the empirical success of the calculations employing the interaction picture on the one hand and the logical force of the theorem on the other.  
Thus, while particle physics has secured for itself a ``comfortable space'' around which to wander, Haag's theorem appears as a sign that the foundation are too loose ``to sustain the expansion of the rooms.''

This paper aims to provide a framework for possible answers to a single, if double-faced, question: \textit{What does Haag's theorem tell us about quantum field theory, present and future?} Several divergent answers have been given already. Indeed, these will shape the paper's framework substantially. Nevertheless, the shape the framework should take is less straightforward than it may at first seem. Even before getting to the nitty-gritty analysis of Haag's theorem, a framework must grapple with the problem of viewpoint, as the following exercise makes clear:

Regardless of your actual field, take whatever career stage you are at\textemdash early graduate student, doctoral candidate, early-, middle- or late-career researcher\textemdash and imagine yourself instead as a particle physicist. You may imagine yourself as an experimentalist or a theoretician, expert in QED or QCD\textemdash whatever comes to mind. Regardless, several things are true of you. First, you are committed to the development of particle physics (no matter what this means in practice). Second, you are steeped in, and reliant on, the interaction picture for your research in and teaching. And third, you have just learned of Haag's theorem and the trouble that it spells for the interaction picture. How might you respond?

Set the exercise up again, except now you are a mathematician committed to contributing to the  development of QFT (no matter what this means in practice). Second, you are steeped in the implications of Haag's theorem, and you are intimately familiar with the axiomatic or algebraic approach to quantum field theory. And third, you believe that a full, conceptually coherent physically realistic replacement for the interaction picture is (presently) unavailable.

Set it up one last time, except now you are a philosopher of science. You may consider yourself a realist or an instrumentalist, interested in metaphysics or methodology\textemdash whatever comes to mind. Regardless, several things are true of you. First, you are committed to understanding the foundations of QFT (no matter what this means in practice). Second, you are familiar both with the major advances in canonical (Lagrangian) QFT that use the interaction picture as well as those based on algebraic QFT (AQFT).\footnote{See \cite{fraser2011take} and \cite{wallace2011taking} for the classic debate over these two formulations of QFT.} But third, it is unclear to you if, or how, these advances can form a consistent whole.

It is not a given that your physicist, mathematician, and philosopher selves will share a single outlook on RQFT, even before considering Haag's theorem. Nor is it clear that they should. Inevitably, this will affect your reaction to our guiding question. Thus, a major contribution of our framework will be to highlight the influence these extra-Haagian outlooks have on understanding Haag's theorem and on assessing its implications for philosophy and for theoretical physics. As we show in section \ref{sec:framework}, many (but not all) of the disagreements about Haag's theorem derive ultimately from different extra-Haagian outlooks, such that the disagreement is far less about Haag's theorem itself than it is about how to do (foundations of) physics. 

This work also aims to contribute to a larger discussion in philosophy of physics. As a quintessential and live example of work at the foundations of physics, the discussions of Haag's theorem 
 draw our attention to important methodological questions: What role does (should) foundational work play in progress in physics? How is foundational work coordinated with non-foundational work, or how should it be? And, what does (should) foundational work even look like? These are undoubtedly heady questions, and we feign no complete answers. Nevertheless, our framework will reveal some of the answers that are being given by the authors under survey, and in so doing these answers open themselves to investigation. As we conclude the paper, we suggest several investigations we expect will sober up future discussions of these heady methodological questions.

The remainder of the paper is structured as follows. In section \ref{sec:HaagThm} we offer a cursory historical overview, a synopsis of a standard proof of Haag's theorem, and discuss the sense in which it raises an alarm that something is not right with the interaction picture. In section \ref{sec:lit} we motivate the need for a framework for understanding the literature on Haag's theorem. The framework itself is given in section \ref{sec:framework}. The framework employs and extends Hilbert's construction analogy: the framework understands each author as something like a contractor giving their \textbf{assessment} of the problem in the foundations of physics heralded by Haag's theorem, their recommendation for \textbf{repair} work on the foundations, and their expectations of the needed long-term \textbf{maintenance or future renovations} to QFT's area of the edifice of science. This section applies the framework to seven leading contemporary viewpoints on Haag's theorem; the key results of this application are given in table \ref{tab:summary}. Section \ref{sec:diagrams} demonstrates  the framework's judicious balance of conceptual structure and flexibility in order to bring clarity to the space of responses to Haag's theorem; it further argues that, at the end of the day, the most important lesson for philosophers to take from Haag's theorem is that we need to put our own energies into clearly answering  meta-level questions regarding the nature and purpose of foundational work in physics. Concluding remarks are given in section \ref{sec:conclusion}.  

We will use the following terms to disambiguate the different approaches to and versions of QFT. By \textit{canonical QFT} we mean what \cite{wallace2001emergence} does by \textit{Lagrangian QFT}: an approach that, for the most part, proceeds from specification of a classical field theory, to its quantization, and finally to its (Wilsonian) renormalization. By \textit{formal variants} of QFT we mean the most familiar collection of approaches focused on mathematical rigor and precision, including algebraic QFT, constructive QFT (CQFT), and axiomatic QFT; these center around the Wightman, Osterwalder-Schrader, or Haag-Kastler axioms.

\section{Haag's Theorem}\label{sec:HaagThm}

\subsection{History of Haag's Theorem}\label{subsec:history}

Haag's theorem is the culmination of two approaches in early quantum field theory. On the one hand, Dyson had combined Feynman's rules and the Tomonaga-Schwinger formalism into a reliable and practical approach for calculating the results of scattering experiments in relativistic quantum field theory. This approach brought with it a shift toward considering scattering amplitudes between \textit{free, asymptotic states} (see \cite{blum2017thestate}), there accomplished by the so-called interaction picture to model interactions \cite{schwinger1948quantum} (see \ref{sec:IP}). The approach was wildly successful, in particular in its use in calculating the anomalous magnetic moment of the electron \cite{Schwinger_magmomel}. This is the genesis of canonical QFT.

On the other hand, spurred by Wigner's precise characterization of special relativity's implications for quantum theory \cite{wigner1939unitary}, a more mathematically rigorous approach to understanding the structure of relativistic quantum field theories maintained an interest in determining the properties of \textit{instantaneous states} in addition to free, asymptotic states. This was the genesis of formal variants QFT. Naturally, a transparent representation of relativistic QFT's mathematical structure was prized, leading to an enumeration of several assumptions. However, the assumptions underlying the former approach's use of the interaction picture were unclear. Haag, in his lecture series given at CERN in 1952--3,\footnote{See \cite{haag2010some} for his recollections of this aspect of the history. Note that Haag says (incorrectly) that the lectures were given in 1953--4; see \cite{lupher2005proved}.} began by stating assumptions for models of relativistic QFT for a single, fundamental field. These included, following Wigner's suggestion, requiring an irreducible Hilbert space representation of the Poincar{\'e} group. Then, Haag used the mathematical tools at his disposal to characterize the interaction picture (e.g., the particle representation of field states, unitary intertwiner). The problem (Haag's theorem \cite{haag1955quantum}) was that the interaction picture, thus characterized, could not be used to represent non-trivial interactions. This result was clarified and generalized by \cite{hall1957theorem}, to which most discussions of Haag's theorem refer. Essentially, Haag, Hall, and Wightman showed that ``to give different physical predictions, two theories of an interacting field which satisfies the canonical commutation relations must use inequivalent representations of the commutation relations'' \cite[2]{hall1957theorem}.

We should make two historical remarks. First, Haag's theorem was not unique in observing the existence and importance of unitarily inequivalent representations of the CCRs. \cite[329]{reedsimon_ii} suggest---and \cite[21]{haag1955quantum} asserts---that von Neumann may have known of the existence of inequivalent representations by the time he wrote \cite{vonNeumann1938oninfinite}, and, von Neumann notwithstanding, the particle representation originating in \cite{dirac1927quantum} was the only one used until the late 1940s. But by 1951, Friedrichs had characterized a new representation and demonstrated its general inequivalence with the particle representation. He moreover argued that ``in a certain sense [such fields] \textit{do occur in nature} \cite[142;153]{friedrichs1953aspects}, and, in fact, the representation allows one to treat problems previously insoluble because of the so-called infrared catastrophe. Similarly, \cite{wightman1955configuration} produced inequivalent representations (using von Neumann's techniques) and suggest their physical relevance; this built, in part, on a model van Hove used to demonstrate that interacting and free state spaces can be orthogonal \cite{vanhove1952difficultes}.\footnote{See \S7.2.2 of \cite{Sbisa:2020whw} for an explanation of how van Hove's result complements, but does not follow from, Haag's theorem. We also note with Sbis{\`a} that van Hove's result implicates ultraviolet divergences, unlike those of Friedrichs or Haag (as Friedrichs himself noted \cite[270-72]{friedrichs1953aspects}).} Garding and Wightman then published their classification of all representations of the CARs in \cite{GardingWightman1954CCRs}, noting along the way that unitary equivalence also depends on the ``terminal behavior'' of the involved unitary operators when the measure (partially) characterizing the representation is nondiscrete \cite[621]{GardingWightman1954CCRs}. This line of thinking led to results like Theorem X.46 of \cite[233]{reedsimon_ii}, which says that free, neutral scalar fields of inequivalent mass carry unitarily inequivalent irreducible representations. This theorem will be relevant in \S4, where e.g., \cite{duncan2012conceptual,klaczynski2016haags,EarmanFraser2006implications}, construe this theorem as the conceptual core of Haag's theorem.

Second, the early reception of Haag's theorem focused on its implications for a particle interpretation but neither \textit{reduced} its significance to its interpretive implications nor viewed it as ultimately damning for particles. It will suffice to discuss Haag himself. While he emphasized its consequences for the ``possibility of defining a theory[\dots] which describes the interaction processes of particles,'' he nevertheless (a) notes in summary that relaxing the equal time vanishing of the CCRs suffices to evade the theorem and (b) claims the assumptions he makes are not only compatible but even ``have physical significance'' in the lowest orders of a perturbation expansion \cite[36-7]{haag1955quantum}. Haag actually emphasizes two escape paths taken by later commentators, namely that his use of strong convergence diverges from the typical convergence factor and that satisfactory physical descriptions do not require use of infinite dimensional Hilbert spaces. The point for Haag, it seems, was to probe the limits of the conventional field theory scheme \cite[32]{haag1955quantum}.

 \subsection{The Interaction Picture}
 \label{sec:IP}
 

Haag's theorem is often construed as a no-go theorem for the interaction picture. As previously noted, the interaction picture is a mainstay of undergraduate and graduate textbooks, and has facilitated the calculation of many physical quantities that have matched experimental results to a high degree of accuracy, \footnote{\label{note:GellMannLow}, for example the celebrated computation of the anomalous magnetic dipole moment of the electron by \cite{Schwinger_magmomel}. As one crude measure of how salient the interaction picture is for actual physics, a search for the exact phrase ``interaction picture'' in Physics Reviews conducted on Nov 4, 2021 returned 4,377 hits. The interaction picture forms the basis of calculations using time-dependent perturbation theory. Crucially, the Gell-Mann and Low theorem makes use of the interaction picture. The original paper \cite{Gell-Mann_Low_1951} presenting this formula has (as of February 4, 2024) 1637 citations recorded on google scholar. This is likely a rather low estimate of how widely the formula is in fact used since the formula is now at the level of common knowledge among physicists, and thus in many cases it might be used without citation. For instance, the Gell-Mann and Low theorem was used to rigorously derive the Bethe–Salpeter equation, which describes the bound states of a system of two fermions  in a relativistic formalism \cite{Bethe_Salpeter_1951}. This equation was used in subsequent quantum electrodynamics calculations of the fine structure of Hydrogen-like atoms \cite{Salpeter_1952}, Helium \cite{Douglas_Kroll_1974} and heavy atoms \cite{Mohr_Plunien_Soff_1998} and contemporary precision calculations of the electron magnetic moment \cite{Kinoshita_2015}. Likewise, the equations have been utilized in the calculation of nucleon-nucleon potentials \cite{Lomon_Partovi_1969, Lomon_1972, Lomon_1976, Lomon_Partovi_1978} and the derivation of scaling laws for interactions with large momentum transfer, confirmed by scattering experiments \cite{Brodsky_Farrar_1975}, among many other examples.} 


As its name suggests, the interaction picture is one way to model interacting fields in canonical quantum field theory. In the Schr{\"o}dinger picture of quantum mechanics, states evolve in time under the full Hamiltonian, whilst operators are stationary. In the Heisenberg picture of quantum mechanics, the operators evolve under the full Hamiltonian, whilst states are stationary. The interaction picture is intermediate between these two pictures. To form the interaction picture, we split the full Hamiltonian into a free and a (time-dependent) interaction part, $H= H_F + H_I$. The evolution of operators is governed by the free Hamiltonian, $H_F$, so the fields are free. The evolution of states is governed by the interaction part, $H_I$.

We stipulate that the operator fields coincide with those of the Heisenberg picture at some time, $t_0$. Let $V(t_2, t_1)$ represent the unitary evolution of the interaction picture states from time $t_1$ to $t_2$, generated by the interacting part of the Hamiltonian, $H_I$,

\begin{equation}
V(t_2, t_1) = e^{-i H_I (t_2 - t_1)} = e^{+i H_F (t_2 - t_1)}e^{-i H (t_2 - t_1)}.
\end{equation}

\noindent We call this operator the intertwiner, or Dyson operator. Then, at all times, $t$, the Heisenberg (subscript $F$) and interaction (subscript $I$) operators are related by the intertwiner as follows,

\begin{align}
\phi_I (t, \mathbf{x}) &= V^{-1}(t,t_0) \phi_H (t, \mathbf{x}) V(t,t_0), \\
\pi_I (t, \mathbf{x}) &= V^{-1}(t,t_0) \pi_H (t, \mathbf{x}) V(t,t_0),
\end{align}

\noindent where at time $t_0$ the operators interaction picture and Heisenberg operators coincide, $\phi_I (t_0, \mathbf{x}) = \phi_H (t_0, \mathbf{x})$ and $\pi_I (t_0, \mathbf{x}) = \pi_H (t_0, \mathbf{x})$. As we will see, it is central to Haag's theorem that the relation of the interaction field, $\phi_I$, to the free field, $\phi_F$ is characterized by a unitary map. 

Generally, we seek to calculate physical amplitudes, taken in the limit in which the fields are free at times $t \rightarrow \pm \infty$. This is meant to capture the intuition that in an interaction, particles begin infinitely far apart (and hence not interacting) and then separate again infinitely far apart after an interaction. The interaction picture may be useful if we can treat the effects of $H_I$ as a small, time-dependent perturbation on the evolution under $H_F$. In perturbation theory we perform approximate calculations by expanding the desired physical quantities in powers of the small interaction, $H_I$.

\subsection{Proof of Haag's Theorem for Spin-free, Neutral, Scalar Fields}

Haag's theorem follows from the results of two other theorems. For clarity of exposition, here we only prove the theorem for neutral, scalar fields; the generalization for other types of fields follows in a straightforward manner. The core strategy is due to \cite{hall1957theorem} rather than \cite{haag1955quantum}. The main technical advance of the former over the latter was a series of theorems concerning the analytic continuation of functions invariant under the orthochronous inhomogeneous Lorentz group, which allowed them to extend the equality of two-, three-, and four-point vacuum expectation values at equal times to any times. For simplicity, we eschew reference to this complex, noting its intuitive gloss as ``a quantitative formulation of the intuitive feeling that in a Lorentz-invariant theory the equivalence of descriptions in different Lorentz frames should somehow restrict the possible correlations between the values of physical quantities at different points in space-time'' \cite[35]{hall1957theorem}. We likewise appeal to the simpler Jost-Schroer theorem, which at any rate affords extension to all vacuum expectation values (when one of the fields is free). For convenience, we also adopt some of the notation and conventions of \cite{EarmanFraser2006implications} and \cite{seidewitz2017avoiding}, including the numbering due to the latter.

\subsubsection{Wightman Axioms}
\setcounter{axiom}{-1}

\begin{axiom}\textbf{States.} We have a physical Hilbert Space, $\mathcal{H}$, for which the states, $\ket{\phi}$, are rays, such that,
\begin{enumerate}[label=0.\arabic*]
\item The states transform according to a continuous unitary representation of the Poincar\'e group, $U(\mathbf{\Delta x}, \Lambda)$, under Poincar\'e transformations, $ \{ \mathbf{\Delta x}, \Lambda\}$. \label{axiom:unitaryPoincare}
\item There is a unique, invariant vaccuum state, $\ket{0}$, in $\mathcal{H}$, invariant under $U$: $U(\mathbf{\Delta x}, \Lambda) \ket{0} = \ket{0}$.
\label{axiom:uniquevacuum}
\item Let $U(\mathbf{\Delta x}, \Lambda) = e^{i P^\mu \mathbf{\Delta x}_\mu}$. Then, $P^\mu P_\mu = - m^2$. We interpret $P^\mu$ as an energy-momentum operator and $m$ as a mass. The eigenvalues of $P^\mu$ lie in the future lightcone.
\end{enumerate}
\end{axiom}

\begin{axiom}\textbf{Domain and continuity of fields.} The field $\phi(x)$ and its adjoint $\phi^\dagger (x)$ are defined on a domain $D$ of states dense in $\mathcal{H}$ containing the vacuum state $\ket{0}$. The $U(\mathbf{\Delta x}, \Lambda)$, $\phi(x)$ and $\phi^\dagger (x)$ all transform vectors in $D$ to vectors in $D$.
\end{axiom}

\begin{axiom}\textbf{Field transformation law.} The fields transform under Poincar\'e transformations as,
\begin{equation}
U(\mathbf{\Delta x}, \Lambda) \phi(x) U^{-1}(\mathbf{\Delta x}, \Lambda) = \phi (\Lambda x + \mathbf{\Delta x}).
\end{equation}
\end{axiom}

\begin{axiom}\textbf{Local commutativity.} \label{axiom:localcommutativity} If $x$ and $x'$ are two spacetime positions,
\begin{equation}
[\phi(x), \phi(x') ] = [\phi ^\dagger (x), \phi ^\dagger (x')] = 0
\end{equation}
Furthermore, if $x$ and $x'$ are space-like separated,
\begin{equation}
[\phi(x), \phi ^\dagger (x') ] = 0.
\end{equation}
\end{axiom}

\begin{axiom}\textbf{Cyclicity of the vacuum.} The vacuum state $\ket{0}$ is cyclic for the fields, $\phi(x)$. That is, polynomials in the fields and their adjoints, when applied to the vacuum state, yield a set $D_0$ dense in $\mathcal{H}$.
\end{axiom}

\subsubsection{Proof of Haag's theorem}

From here on, it is convenient to decompose four-vectors as $x = (t, \mathbf{x})$, where the right hand side are the temporal and spatial components, respectively.

\begin{theorem}\textbf{Equality of equal-time vacuum expectation values.}
\label{EETVEVs} 
Let $\phi_1$ and $\phi_2$ be two field operators, with associated conjugate momentum operators, $\pi_1$ and $\pi_2$, defined in respective Hilbert spaces $\mathcal{H}_1$ and $\mathcal{H}_2$, satisfying the Wightman axioms listed above, and satisfying the equal time commutation relations,

\begin{align}
\label{eq:CCRs}
[\phi_i(t,\mathbf{x}), \pi_i(t, \mathbf{x}')] &= i\delta(\mathbf{x}-\mathbf{x}'), \\
[\phi_i(t,\mathbf{x}), \phi_i(t, \mathbf{x}')] &= [\pi_i(t,\mathbf{x}), \pi_i(t,\mathbf{x}')] = 0.
\end{align}

Suppose that there exists a unitary operator $G$ such that, at some specific time $t_0$, 

\begin{align}
\phi_2 (t_0,\mathbf{x}) &= G \phi_1 (t_0,\mathbf{x}) G^{-1}, 
\label{eqn:unitaryrelationbetween2fields_1} \\
\pi_2 (t_0,\mathbf{x}) &=  G \pi_1 (t_0,\mathbf{x}) G^{-1}.
\label{eqn:unitaryrelationbetween2fields_2} 
\end{align}


\noindent We call G an intertwiner for the fields $\phi_1$ and $\phi_2$. Then the equal-time vacuum expectation values of the fields coincide. (Note that so far, this holds only at the particular time, $t_0$.) 
\end{theorem}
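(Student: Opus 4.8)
The plan is to reduce the entire statement to one fact about how the intertwiner $G$ acts on the vacuum, and then to read that fact off from uniqueness of the vacuum. Throughout, an \emph{equal-time vacuum expectation value} means a quantity of the form $\bra{0_i} X_i^{(1)}\cdots X_i^{(n)} \ket{0_i}$, where each $X_i^{(k)}$ is one of the smeared time-$t_0$ operators $\phi_i(t_0,f_k)=\int \phi_i(t_0,\mathbf{x})f_k(\mathbf{x})\,d\mathbf{x}$ or $\pi_i(t_0,g_k)$; the claim is that this number is the same whether computed in $\mathcal{H}_1$ or in $\mathcal{H}_2$, and in particular that this holds for the two-, three-, and four-point functions that the downstream analyticity argument will require.

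First I would record the conjugation identity. Smearing \eqref{eqn:unitaryrelationbetween2fields_1}--\eqref{eqn:unitaryrelationbetween2fields_2} gives $X_2^{(k)} = G\, X_1^{(k)}\, G^{-1}$ for every factor, so adjacent $G^{-1}G$ pairs telescope and $X_2^{(1)}\cdots X_2^{(n)} = G\,\big(X_1^{(1)}\cdots X_1^{(n)}\big)\,G^{-1}$. Hence $\bra{0_2} X_2^{(1)}\cdots X_2^{(n)}\ket{0_2}$ equals $\bra{\chi} X_1^{(1)}\cdots X_1^{(n)}\ket{\chi}$, where $\ket{\chi}$ denotes $G^{-1}\ket{0_2}$. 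So it suffices to show $\ket{\chi} = c\,\ket{0_1}$ for some phase $c$: then $|c|^2=1$ absorbs the scalar and the two expectation values coincide. The only domain point is that $\ket{\chi}$ must lie in the common dense domain $D$ of Axiom~1 on which the time-$t_0$ fields act, and this is automatic once we know $\ket{\chi}\propto\ket{0_1}$, since $\ket{0_1}\in D$.

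Second --- and this is the crux --- I would establish $G^{-1}\ket{0_2}\propto\ket{0_1}$. This uses the property that the intertwiner not only conjugates the time-$t_0$ fields but also intertwines the kinematical symmetry of the $t_0$-slice: the Euclidean group of spatial translations and rotations, $U_2(\mathbf{a},R)=G\,U_1(\mathbf{a},R)\,G^{-1}$. This is part of the notion of intertwiner in the formulations we follow, and it holds automatically in the interaction-picture application, where the fixed-time Dyson operator is assembled from a spatially translation- and rotation-invariant interaction density. Given it, $U_1(\mathbf{a},R)\,G^{-1}\ket{0_2} = G^{-1}U_2(\mathbf{a},R)\ket{0_2} = G^{-1}\ket{0_2}$ by invariance of $\ket{0_2}$ (Axiom~\ref{axiom:unitaryPoincare} for theory~2); so $G^{-1}\ket{0_2}$ is invariant under all spatial Euclidean motions, and by the spectrum condition (the third clause of Axiom~0) together with uniqueness of the vacuum in theory~1 (Axiom~\ref{axiom:uniquevacuum}), it is a scalar multiple of $\ket{0_1}$. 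Combining the two steps proves the theorem. Note that the equal-time canonical commutation relations \eqref{eq:CCRs} are not used here; they, and the analyticity machinery alluded to above, enter only when this equality is propagated off the $t_0$-slice.

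I expect the main obstacle to be exactly this second step. One has to recognize that some form of ``$G$ commutes with the $t_0$-slice symmetries'' is indispensable --- conjugation by an arbitrary unitary realizing \eqref{eqn:unitaryrelationbetween2fields_1}--\eqref{eqn:unitaryrelationbetween2fields_2} does not by itself relate the two vacua, so without this hypothesis the conclusion need not follow --- and then one has to be careful about the upgrade from spatial-translation invariance to the full invariance hypothesis of the uniqueness axiom, which goes through the spectrum condition. The remaining work --- the telescoping of $G^{-1}G$ pairs and the attendant domain bookkeeping for operator-valued distributions --- is routine within the Wightman framework laid out above.
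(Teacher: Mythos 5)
Your proof is correct and follows essentially the same route as the paper's: both hinge on the relation $U_2(\mathbf{\Delta x},\mathbf{R}) = G\,U_1(\mathbf{\Delta x},\mathbf{R})\,G^{-1}$ for the spatial Euclidean group at the slice $t_0$, both use uniqueness of the invariant vacuum to conclude $G\ket{0}_1 = c\ket{0}_2$ with $|c|=1$, and both then conjugate products of the time-$t_0$ fields to equate the equal-time expectation values. The only difference is bookkeeping: you postulate the Euclidean intertwining of the two representations as part of what an intertwiner is, whereas the paper supposes each $U_i$ implements the Euclidean motions on its fields and asserts that the intertwining then ``follows'' from the field conjugation relations at $t_0$\textemdash a step that tacitly requires the equal-time fields and momenta to form an irreducible set\textemdash so your explicit flagging of this as an indispensable extra input (and of the spectrum-condition upgrade from Euclidean invariance to vacuum uniqueness) is, if anything, more careful than the original.
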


\begin{proof}

Let $U_i(\mathbf{\Delta x}, \mathbf{R})$ be a continuous, unitary representation of the inhomogeneous Euclidean group of translations, $\mathbf{\Delta x}$, and three-dimensional rotations $\mathbf{R}$, defined on each $\mathcal{H}_i$, $i=1,2$. Let  us further suppose that  transformations $U_i(\mathbf{\Delta x}, \mathbf{R})$ induce Euclidean transformations of the field

\begin{align}
U_i(\mathbf{\Delta x}, \mathbf{R}) \phi_i (t_0,\mathbf{x}) U_i ^{-1} (\mathbf{\Delta x}, \mathbf{R}) &= \phi (t_0, \mathbf{R}\mathbf{x} + \mathbf{\Delta x}), \\
U_i(\mathbf{\Delta x}, \mathbf{R}) \pi_i (t_0,\mathbf{x}) U_i ^{-1} (\mathbf{\Delta x}, \mathbf{R})  &= \pi (t_0, \mathbf{R}\mathbf{x} + \mathbf{\Delta x}), 
\end{align}

\noindent  as in axiom \ref{axiom:unitaryPoincare}. From our supposition (equations \ref{eqn:unitaryrelationbetween2fields_1} and \ref{eqn:unitaryrelationbetween2fields_2}), it follows that

\begin{equation}
U_2 (\mathbf{\Delta x}, \mathbf{R}) = G U_1 (\mathbf{\Delta x}, \mathbf{R}) G^{-1}.
\end{equation}

\noindent But since the representations possess unique invariant vacuum states $\ket{0}_i$ such that  $U_i(\mathbf{\Delta x}, \Lambda) \ket{0}_i = \ket{0}_i$, as in axiom \ref{axiom:uniquevacuum}.,

\begin{equation}
c \ket{0}_2 = G \ket{0}_1,
\end{equation}

\noindent where $c$ is a complex number of absolute value 1, $|c| = 1$. In other words, up to a phase factor, $G \ket{0}_1$ is the vacuum state for field $\phi_2$ at time $t_0$.

It further follows that the equal-time vacuum expectation values (Wightman functions, also known as correlation functions) of the two fields are the same,

\begin{equation}
   \prescript{}{1}{ \bra{0}} \phi_1 (t_0, \mathbf{x}_1), \ldots \phi_1 (t_0, \mathbf{x}_n) \ket{0}_1 
= \prescript{}{2}{\bra{0}} \phi_2 (t_0, \mathbf{x}_1), \ldots \phi_2 (t_0, \mathbf{x}_n) \ket{0}_2 ,
\end{equation}

\noindent for $\mathbf{x}_1, \mathbf{x}_2$ up to at most $\mathbf{x}_4$, all at the same fixed time, $t_0$.\footnote{If at least one of $\phi_1$ or $\phi_2$ is a free field, then this can be proven to hold for all Wightman functions, i.e. for $\mathbf{x}_1, \mathbf{x}_2$ up to any $\mathbf{x}_n$.}
\end{proof}

\begin{theorem}\textbf{Jost-Schroer theorem.} For any free scalar field $\phi$, the two-point vacuum expectation values are given by

\begin{equation}
 \bra{0} \phi(x_1) \phi^\dagger (x_2) \ket{0} = \Delta^+ (x_1-x_2),
\label{eqn:2pointvacuum}
\end{equation}

\noindent where $ x_1 = (t_1, \mathbf{x}_1) $ and $ x_2 = (t_2, \mathbf{x}_2) $ are arbitrary spacetime four-vectors. $\Delta^+$ is the advanced Feynman propagator\footnote{For example, see \cite[pages 209-210 and 256]{duncan2012conceptual}}, given by, 
\begin{equation}
\Delta^+ (x_1 - x_2) = (2\pi)^{-3} \int d^3p \frac{e^{i [-\omega_{\mathbf{p}} (t_1 - t_2) + \mathbf{p} \cdot (\mathbf{x}_1 -\mathbf{x}_2)]}}{2\omega_{\mathbf{p}}},
\end{equation}
and $\omega_p = \sqrt{\mathbf{p}^2 + m^2}$. 

If, for any arbitrary field, the vacuum expectation values are given by equation \ref{eqn:2pointvacuum}, then that field is a free field.

\end{theorem}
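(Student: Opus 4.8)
The statement has two halves of very different character. The forward implication---that the two-point function of \emph{any} free scalar field is the distribution in equation \ref{eqn:2pointvacuum}---is bookkeeping; the converse---that \emph{only} free fields have this two-point function---is the substantive Jost--Schroer content, and is where essentially all the work lies. For the forward half the plan is to start from what ``free field'' means: the canonical equal-time commutation relations together with the Klein--Gordon equation $(\square + m^2)\phi = 0$, equivalently the mode expansion $\phi(x) = (2\pi)^{-3}\!\int d^3p\,(2\omega_{\mathbf p})^{-1}\big(a_{\mathbf p}\,e^{ipx} + a_{\mathbf p}^\dagger\,e^{-ipx}\big)$ with $[a_{\mathbf p},a_{\mathbf q}^\dagger] = (2\pi)^3\,2\omega_{\mathbf p}\,\delta(\mathbf p - \mathbf q)$ and $a_{\mathbf p}\ket{0} = 0$. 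Substituting this into $\bra{0}\phi(x_1)\phi^\dagger(x_2)\ket{0}$ and using $a_{\mathbf p}\ket{0} = 0$ on both sides, the only surviving term is the one proportional to $\bra{0}a_{\mathbf p} a_{\mathbf q}^\dagger\ket{0} = (2\pi)^3\,2\omega_{\mathbf p}\,\delta(\mathbf p - \mathbf q)$; carrying out the $d^3q$ integral against the delta function leaves exactly the integral defining $\Delta^+(x_1 - x_2)$. I would present this quickly---the only subtlety is the distributional bookkeeping, which is routine.

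For the converse the plan has four steps. First, invoke the K\"all\'en--Lehmann representation: for any scalar field obeying the Wightman axioms, inserting a complete set of energy--momentum eigenstates and using the spectrum condition together with Lorentz covariance yields $\bra{0}\phi(x_1)\phi^\dagger(x_2)\ket{0} = \int_0^\infty d\mu^2\,\rho(\mu^2)\,\Delta^+_\mu(x_1 - x_2)$ for a positive tempered measure $\rho$, with $\Delta^+_\mu$ the mass-$\mu$ analogue of $\Delta^+$. Since the $\Delta^+_\mu$ are linearly independent as distributions, the hypothesis forces $\rho(\mu^2) = \delta(\mu^2 - m^2)$. Second, deduce that the vector-valued distribution $x \mapsto \phi(x)\ket{0}$ has its energy--momentum support on the forward mass shell: the spectral measure of the state $\phi(f)\ket{0}$ relative to the energy--momentum operator is precisely $\rho$, so a short norm estimate gives $\lVert(\square + m^2)\phi(f)\ket{0}\rVert = 0$, i.e.\ $(\square + m^2)\phi(x)\ket{0} = 0$. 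Third, promote this to an operator identity: $j(x) := (\square + m^2)\phi(x)$ is again a Wightman field, local relative to $\phi$ since it is a constant-coefficient derivative of $\phi$, and by the previous step it annihilates the vacuum; because the Reeh--Schlieder theorem makes the vacuum separating for local algebras, a local field that kills the vacuum must vanish, so $(\square + m^2)\phi = 0$ as operators.

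Fourth and last, argue that the free field equation, locality, and the correct two-point function together pin down \emph{all} the Wightman functions. Because $j = 0$, applying $(\square + m^2)$ in each variable of a truncated (connected) $n$-point function gives zero, so its Fourier transform is supported on the product of mass shells; combining this with the spectrum condition on the partial momenta and with the support constraints on the commutator furnished by the Jost--Lehmann--Dyson representation, one checks on purely kinematic grounds that the truncated functions of order $n \ge 3$ vanish. (For $n = 3$ this is immediate: the partial momenta $q_1, q_2$ then lie on the forward mass shell, which forces $q_1 \cdot q_2 = m^2/2$, contradicting $q_1 \cdot q_2 \ge m^2$ for forward timelike vectors of mass $m$.) A field whose only nonvanishing truncated function is the mass-$m$ two-point function is the free field of mass $m$; reconstructing $a_{\mathbf p}, a_{\mathbf p}^\dagger$ from the positive- and negative-frequency parts of $\phi$ then recovers the canonical commutation relations and completes the identification.

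The main obstacle is this fourth step. Steps one through three are comparatively soft once the standard structural inputs (K\"all\'en--Lehmann, Reeh--Schlieder) are granted, but excluding every higher-order connected correlation genuinely requires complex-analytic machinery---the analyticity domains of the Wightman functions and the Jost--Lehmann--Dyson representation of the commutator---rather than the algebra used elsewhere in this section. For the exposition here I would simply state the theorem and cite Jost and Schroer, with the textbook treatment in Streater and Wightman, rather than reproduce the argument---paralleling the way the Hall--Wightman analytic-continuation machinery has been set aside earlier in this section.
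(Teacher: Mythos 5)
The paper offers no proof of this theorem at all: it states the result and cites Jost (1961) for positive mass and Pohlmeyer (1969) for the massless extension, exactly as you propose to do in your closing paragraph. So your bottom-line treatment coincides with the paper's. The proof sketch you supply on top of that is the standard Jost--Schroer argument (essentially Streater--Wightman's Theorem 4-15): K\"all\'en--Lehmann forces the spectral measure to be $\delta(\mu^2 - m^2)$, hence $(\square + m^2)\phi(x)\ket{0} = 0$; Reeh--Schlieder promotes this to the operator equation $(\square + m^2)\phi = 0$; and support/analyticity arguments kill the truncated $n$-point functions for $n \ge 3$. That outline is sound, with two caveats worth flagging. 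First, in your $n=3$ kinematic argument the sign is off: with $q_1^2 = q_2^2 = (q_1+q_2)^2 = m^2$ and $q_1, q_2$ forward timelike one gets $q_1\cdot q_2 = -m^2/2$, which contradicts $q_1\cdot q_2 \ge m^2$; as written ($q_1 \cdot q_2 = m^2/2$) the contradiction with $q_1\cdot q_2 \ge m^2$ does not follow. Second, the theorem as stated in the paper covers $m = 0$, where your mass-gap kinematics (and the $m>0$ Jost--Schroer argument generally) break down --- that case genuinely requires Pohlmeyer's separate proof, so the citation is doing real work there rather than merely abbreviating an argument you could have given.
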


This result was first proved by Jost \cite{Jost_1961} for fields of positive mass, and extended to fields of zero mass by Pohlmeyer \cite{Pohlmeyer_1969}.

\begin{theorem}\textbf{Haag's theorem for scalar fields.} Let $\phi_1$ be a free, scalar field, which therefore satisfies equation \ref{eqn:2pointvacuum}. Let $\phi_2$ be a second, locally Lorentz-covariant scalar field. Let us assume that $\phi_2$ is unitarily related to $\phi_1$ at time $t_0$, as in equations \ref{eqn:unitaryrelationbetween2fields_1} and \ref{eqn:unitaryrelationbetween2fields_2}. 
Further let us assume that the conjugate momenta fields (written as adjoints $\phi_{1}^\dagger$ and $\phi_{2}^\dagger$) satisfy the hypotheses of Theorem \ref{EETVEVs}. Then $\phi_2(x)$ is also a free field.

\end{theorem}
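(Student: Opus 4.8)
The plan is to chain the two preceding theorems together: use the unitary intertwiner $G$ to transfer the free-field two-point function from $\phi_1$ to $\phi_2$, and then invoke the converse half of the Jost--Schroer theorem to conclude that $\phi_2$ is free. So the argument has exactly three moves — unpack the hypotheses, propagate the two-point function, apply Jost--Schroer — with essentially all the content hidden in the middle move.

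First I would record what the hypotheses already supply. The assumption that $\phi_2$ is a locally Lorentz-covariant scalar field gives us Axioms 1--4 for $\phi_2$ on $\mathcal{H}_2$ — in particular the field transformation law and local commutativity that the Jost--Schroer theorem will require — together with a unique invariant vacuum $\ket{0}_2$; the clause that the conjugate momenta satisfy the hypotheses of Theorem \ref{EETVEVs} guarantees the equal-time commutation relations, and, via $G$, the pair $(\phi_2,\pi_2)$ is related to $(\phi_1,\pi_1)$ at the single time $t_0$ exactly as in equations \ref{eqn:unitaryrelationbetween2fields_1}--\ref{eqn:unitaryrelationbetween2fields_2}. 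Applying Theorem \ref{EETVEVs} then yields that the equal-time vacuum expectation values of $\phi_2$ coincide with those of $\phi_1$ at time $t_0$ (and that $G\ket{0}_1$ is, up to a phase, the vacuum of $\phi_2$); since $\phi_1$ is free, the footnote to that theorem upgrades this to agreement of all Wightman functions, not merely up to four points, on the slice $t=t_0$.

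Next I would push the agreement off the surface $t=t_0$ to arbitrary spacetime arguments. For a translation- and Lorentz-covariant field the two-point function $\bra{0}\phi_2(x_1)\phi_2^\dagger(x_2)\ket{0}$ depends only on $x_1-x_2$ and is the boundary value of a function holomorphic in the forward tube, so its restriction to the spacelike slice $t_1=t_2=t_0$ determines it everywhere; this is the analytic-continuation input flagged (and black-boxed) in the proof preamble. Hence $\bra{0}_2\phi_2(x_1)\phi_2^\dagger(x_2)\ket{0}_2 = \bra{0}_1\phi_1(x_1)\phi_1^\dagger(x_2)\ket{0}_1 = \Delta^+(x_1-x_2)$ for all four-vectors $x_1,x_2$, i.e.\ $\phi_2$ has a two-point vacuum expectation value of exactly the form in equation \ref{eqn:2pointvacuum}. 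The converse direction of the Jost--Schroer theorem then applies verbatim and gives that $\phi_2$ is a free field, completing the proof.

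I expect the genuine obstacle to be the middle move — extending equality from equal times to all times — rather than the bookkeeping at either end. Theorem \ref{EETVEVs} and Jost--Schroer are deployed as black boxes, but the passage between them quietly invokes the Hall--Wightman machinery on analytic continuation of Lorentz-invariant functions (and, for the full $n$-point statement, the Jost--Schroer apparatus itself), which is precisely the technically heavy material the exposition chose to suppress; a careful write-up either has to reinstate it or be explicit that it is being cited. A secondary point worth flagging is that one should not conflate $\pi_2$ with $\dot\phi_2$ or assume more structure on $\mathcal{H}_2$ than the hypotheses grant — but the vacuum identification one might worry about is already part of the proof of Theorem \ref{EETVEVs}, so it can simply be cited.
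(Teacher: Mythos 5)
Your proposal is correct and follows essentially the same route as the paper: apply Theorem \ref{EETVEVs} to obtain agreement of the two-point vacuum expectation values at $t_0$, extend that agreement to arbitrary spacetime arguments, and invoke the converse direction of the Jost--Schroer theorem to conclude that $\phi_2$ is free. The only divergence is in the middle step, where the paper justifies the extension by elementary Lorentz-covariance arguments (spacelike pairs can be brought into an equal-time plane by a Lorentz transformation, and time translation decomposes into boosts plus spatial translation) rather than the forward-tube analyticity you cite --- a difference in which black box is invoked rather than in strategy, and your caveat that the Hall--Wightman continuation machinery is being suppressed matches the paper's own explicit disclaimer in the proof preamble.
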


\begin{proof}
Theorem \ref{EETVEVs} tells us that the equal time vacuum expectation values of the two fields, $\phi_1$ and $\phi_2$ must coincide at $t_0$, i.e.,
\begin{equation}
   \prescript{}{1}{ \bra{0}} \phi_1 (t_0, \mathbf{x}_1) \phi_1 (t_0, \mathbf{x}_2) \ket{0}_1 
= \prescript{}{2}{\bra{0}} \phi_2 (t_0, \mathbf{x}_1) \phi_2 (t_0, \mathbf{x}_2) \ket{0}_2 .
\end{equation}

\noindent Since the field $\phi_1$ is free, it follows from Theorem 2 (equation \ref{eqn:2pointvacuum}) and equation 20 that the two-point vacuum expectation values for the two fields coincide at $t_0$, i.e.,

\begin{equation}
\prescript{}{2}{\bra{0}} \phi_2 (t_0, \mathbf{x}_1) \phi_2^\dagger (t_0, \mathbf{x}_2) \ket{0}_2 = \prescript{}{1}{\bra{0}} \phi_1 (t_0, \mathbf{x}_1) \phi_1^\dagger (t_0, \mathbf{x}_2) \ket{0}_1.
\label{eqn:2pointvacuumforphi2_t0}
\end{equation}

So far, this holds \textit{only} at $t_0$. However, any two spacelike separated position vectors, $(t_1, x_1)$ and $(t_2, x_2)$, can be brought into the equal time plane $t_1 = t_2$ by a Lorentz transformation. 
Thus, the Lorentz-covariance of $\phi_2$ allows us to extend the satisfaction of equation \ref{eqn:2pointvacuumforphi2_t0} to any two spacelike positions, and then, because time translation is equivalent to a Lorentz boost plus spatial translation plus Lorentz boost \cite[317]{EarmanFraser2006implications}, to any two positions:

\begin{equation}
\prescript{}{2}{\bra{0}} \phi_2 (x_1) \phi_{2}^\dagger (x_2) \ket{0}_2 = \Delta^+ (x_1-x_2).
\label{eqn:2pointvacuumforphi2}
\end{equation}

\noindent Therefore, by Theorem 2, $\phi_2(x)$ must be a free field.

\end{proof}

\subsection{Implications of Haag's Theorem}
\label{sec:implications-haag}


By the criteria used by particle physicists, the interaction picture has undoubtedly produced numerous notable successes (see note \ref{note:GellMannLow}). However, the interaction picture is generally taken to depend upon all of the assumptions needed for Haag's theorem, including Poincar\'e invariance and the existence of a unitary operator relating the two fields, that is, that the fields are unitarily equivalent. The apparent clash between the interaction picture and Haag's theorem arises as follows. If the fields, $\phi_F$ and $\phi_I$, obey the Wightman axioms of QFT, and we have a unitary operator intertwining the two fields at even a single time (as in equations \ref{eqn:unitaryrelationbetween2fields_1} and \ref{eqn:unitaryrelationbetween2fields_2}), 
 and $\phi_F$ is free, then according to Haag's theorem $\phi_I$ must also be free. So the interaction must be trivial.

So Haag's Theorem appears to be a no-go theorem for calculations that use the interaction picture. At a glance, it would be mathematically inconsistent to use the interaction picture for calculations involving any of the non-trivial interactions that we care about in particle physics. However, a closer look reveals a whole labyrinth of philosophical, mathematical, and physical issues at stake in understanding the full significance of Haag's theorem. We take that closer look in the next section. 


\section{What the Haag Is Going on?}\label{sec:lit}
There are a few  points on which all parties generally agree. First, judged by its own particular standards of success, particle physics --including the interaction picture-- is highly successful. 

Second, there is a consensus that Haag's theorem poses a \textit{bona fide} problem for the standard presentation of the interaction picture.\footnote{More precisely, it is agreed to pose a problem for the pre-renormalization interaction picture.} The theorem itself is mathematically correct: as Klaczynski puts it, ``it is a mathematical theorem in the truest sense of the word; it brings with it the `hardness of the logical must'.'' \cite{klaczynski2016haags}. Further, it is not disputed that the assumptions of Haag's theorem hold in the case of the standard textbook presentations of the interaction picture.\footnote{For example, standard presentations assume, either implicitly or explicitly, a Poincar\'e-invariant theory, and that there are distinct free and interacting fields, related by a unitary operator.} No one, to our knowledge, argues that the problem posed by Haag's theorem is simply illusory.\footnote{It may very well be that many practicing physicists today would say that the problem, if not illusory, is inconsequential. Perhaps many physicists think that the problem posed by Haag's theorem is at most rooted in an unrealistic idealization of non-interacting states at temporal infinity. And perhaps they therefore choose not to address Haag's theorem in their textbooks, lecture notes, or research articles. Although several textbooks from the  1960's give brief assessments of the theorem, more recent treatments rarely mention it (see \cite[307-309]{EarmanFraser2006implications}). We can do little more than speculate that such an unspoken consensus fully explains the dearth of references to Haag's theorem in standard accounts of QFT\textemdash simple ignorance of the theorem may be just as strong of a causal factor. David Tong's lecture notes on QFT, for instance, do not explicitly mention Haag's theorem; but they do say that the assumption of non-interacting states in the interaction picture is wrong and should be replaced by the interactions-always-on interpretation of the LSZ reduction formula \cite[54-55, 79-80]{tong2006qftnotes}. We might tentatively read this as an attitude towards Haag's theorem as minimally useful: it points out that an initial assumption was wrong, and in wise pedagogical fashion, Tong will correct that assumption in due course when more complexity and sophistication can be digested. But no need to belabor the point by naming and deeply discussing the theorem. Thus, while it is possible that this sort of largely dismissive response to Haag's theorem is widespread, too little of it exists in print to be extensively covered in the remainder of this paper.} It is, rather, the severity and appropriate remedy of the problem that is subject to debate.

However, the severity of the problem raised by Haag's theorem clearly stops short of spelling the demise of that research program predicated on the application of quantum theories of fields to scattering experiments, commonly called particle physics.\footnote{Though see the discussion of Kastner (section \ref{subsec:Kastner}) and Seidewitz (section \ref{subsec:Seidewitz}) below for genuine proposals of new physical theories, each drawing some motivation from Haag's theorem. } 
Even the practitioners of AQFT, operating far afield from the details of the Standard Model's phenomenology, still conceive of their work as contributing to the scientific enterprise whose primary goal is to develop the quantum theory of fields as the appropriate theoretical apparatus for understanding scattering experiments.  So, then, what \textit{is} going on with Haag's theorem such that these two stances \textemdash the interaction picture has been used successfully, and Haag's theorem poses a substantive problem for the interaction picture\textemdash can be held together? 
This central tension 
is widely recognized as a point of agreement. As Teller succinctly puts it, 

\begin{quotation}
\noindent
Everyone must agree that as a piece of mathematics Haag's theorem is a valid result that at least appears to call into question the mathematical foundation of interacting quantum field theory, and agree that at the same time the theory has proved astonishingly successful in application to experimental results. \cite[115]{teller2020interpretive}
\end{quotation}

And yet, despite this initial agreement, extant responses to Haag's theorem form a confusing lot. The literature on Haag's theorem reflects a number of different assessments of the import of the theorem for both (mathematical) physics and philosophy. Moreover, the extent to which these different assessments make meaningful contact with each other is often unclear. In this section, we briefly illustrate the nature of the confusion in this literature.  First, the confusion is \textit{not} about the status of Haag's theorem as a mathematical result. All parties agree that the original theorem, and its several generalizations, have valid proofs. The confusion enters when trying to trace out the ramifications of Haag's theorem for the foundations of QFT. \cite{EarmanFraser2006implications} put it well, saying that ``the theorem provides an entry point into a labyrinth of issues that must be confronted in any satisfactory account of the foundations of QFT'' (p. 334). Once we have entered into this labyrinth via Haag's theorem, we encounter a host of conceptual and interpretive issues, enmeshed in technical issues of mathematics and physics, making even the range of options for a way forward through the labyrinth unclear, much less which one may be the best. 

A reader interested in Haag's theorem and its implications for the use of the interaction picture in physics may first look for insight from Earman's and D. Fraser's seminal paper, ``Haag's theorem and its implications for the foundations of quantum field theory.'' They conclude,  ``On any reading Haag's theorem undermines the interaction picture and the attendant approach to scattering theory'' \cite[333]{EarmanFraser2006implications}. So, the reader naturally thinks, \textit{the interaction picture is no good}. And yet for Duncan, ``the proper response to Haag's theorem is simply a frank admission that the same regularizations needed to make proper mathematical sense of the dynamics of an interacting field theory at each stage of a perturbative calculation will do double duty in restoring the applicability of the interaction picture at intermediate stages of the calculation'' \cite[370]{duncan2012conceptual}\textemdash the interaction picture survives! Miller concurs, adding that the success of calculations delivered from regularized and renormalized theories is explained by the conjecture that ``perturbative expansions are asymptotic to exact solutions of a theory that generates them'' \cite[818]{miller2018haag}.  \textit{So}, the reader concludes triumphantly, \textit{the interaction picture works}!

Still more paths begin to emerge, however. According to Klaczynski, these renormalized theories evade Haag's theorem precisely by \textit{denying} that the interaction picture exists \cite{klaczynski2016haags}. Maiezza and Vasquez agree, arguing that ``due to \textit{Haag's theorem}, it is \textit{impossible to define QFT} starting from the interaction picture with free fields'' \cite[10]{maiezza2020haag} (italics in original); indeed, they seem to argue that the interaction picture fails precisely because of the failure of the conjecture Miller relies on to save it. Yet confusingly, Maiezza and Vasquez \textit{also} disagree with Klaczynski on what saves perturbative calculations from Haag's theorem.

The paths so far, while many, nevertheless seem to turn on what to say about the mathematical coherence of the interaction picture. Thus, \textit{a labyrinth though it may be}, the reader thinks, \textit{I can at least see its basic structure}. The reader has judged too soon, however, for it is not only the interaction picture \textit{per se} that is at stake, but the metaphysics: ``\textit{either the assumptions of Haag’s theorem do not hold, in which case there is no particle notion applicable to a scattering experiment at intermediate times, or they do, in which case the particle notion applicable at intermediate times is incommensurable with the ingoing/outgoing particle notions, if the interaction is non-trivial}'' \cite[252]{ruetsche2011interpreting} (italics in original). \textit{So}, the reader thinks, \textit{the path out of the Haagian labyrinth requires the banishment of particles and the embracing of fields} (as \cite{halvorson2002no} suggest)! Not so, says Kastner: \textit{particles} exist, and \textit{fields} must be banished \cite{kastner_directaction}.

The reader is thus confronted with paths out of the Haagian labyrinth diverging on both metaphysical and mathematical grounds and, worse still, she cannot tell from her place in the labyrinth whether or where these paths coincide. As if this were not bad enough, a fog sets in. \textit{Are we even trapped at all?}, the reader asks, \cite[356]{seidewitz2017avoiding} in hand, for Haag's theorem arises in traditional QFT only because time is not ``treated comparably to the three space coordinates, rather than as an evolution parameter.'' Thus, the Haagian labyrinth could have been entirely avoided had time been treated in a relativistically sensible manner at the outset.

How should the reader react to this? Is there a Haagian labyrinth? And if so, which path will lead us out? The framework given in the next section gives a fixed structure for organizing and assessing this confusing network of responses to Haag's theorem, thereby creating several distinct mappings of the Haagian labyrinth. More general lessons from studying these maps are given in section \ref{sec:diagrams}.

\section{The Framework: Assessment, Repair, and Renovation and Maintenance}\label{sec:framework}

Before presenting the framework, we should carefully consider its goal. The purpose of this framework, as we said in the introduction, is to fruitfully structure and organize answers to the question ``What does Haag's theorem tell us about quantum field theory, present and future?''  Two desiderata for such a framework are fairly obvious. First, it should sensibly organize the answers that have already been given to this question. As such, we will apply the framework to prominent answers in the literature. 
Second, the framework should leave significant room for future developments. Given the interdisciplinary nature of the study of Haag's theorem, and given the recent increase of interest from physicists (\cite{klaczynski2016haags}, \cite{maiezza2020haag}, and \cite{seidewitz2017avoiding}), we expect there is much more to be said on this topic. A good framework for organization, therefore, must have space to accommodate these expected future developments.

Expanding on Hilbert's construction analogy, we will construe QFT as a building within the ``edifice of science'' that is continually under construction. Responses to Haag's theorem, then,  will be organized and analyzed according to their assessment or diagnosis of the problem posed by the theorem, the appropriate immediate repairs to that problem, and any longer-term renovations or maintenance measures called for in light of the assessment:
\begin{itemize}
    \item \textbf{Assessment:} What precisely is the problem posed by Haag's theorem, if any? For what objective(s) is this a problem? Or, are there multiple problems? 
    \item \textbf{Repair:} How should this problem be remediated?
    \item \textbf{Maintenance or Renovation:} Where should resources (time, attention, grant funding, conference and journal platforms, etc.) for the next (relevant) phase of research be allocated?
\end{itemize}
On one hand, the authors surveyed below may be thought of as subcontractors brought in to assess and diagnose the building's ability to support further growth in light of Haag's theorem. On the other hand, we the authors of this present paper, may be thought of as general contractors. As such we aim to arrange and make comparable these subcontractors' assessments, which we do in \ref{sec:framework}.


However, our ultimate goal as `general contractor' is to provide guidance on the building's construction\textemdash specifically, guidance as to how stakeholders in the future of QFT can coordinate their efforts on their shared interdisciplinary goals (see section \ref{sec:diagrams} below).
 This requires us to explain any disagreements amongst the subcontractors. The points of general agreement provide a convenient starting point. Recall from the outset of section \ref{sec:lit} that all authors agree:

\begin{enumerate}
    \item  particle physics is successful \textit{by its own standards}; and
    \item  Haag's theorem presents a \textit{bona fide} problem for using the interaction picture to relate distinct fields \textit{prior to renormalization}. 
\end{enumerate}
Differences of Assessment clarify where interlocutors in fact part ways beyond these points. Differences of Repair and Renovation reveal differences of motivation, expectation, and aim, rather than matters of fact. We have called these \textit{extra-Haagian outlooks} in order to stress that they are more abstract, give rise to less adversarial disagreements, and stem from commitments that extend far beyond the technical reach of Haag's theorem. Each of our two initial points of agreement gives rise to a distinctive locus of extra-Haagian outlooks.

Agreement (1) belies the different standards one might hold predictions to. It is agreed that the calculational methods produce reliable predictions. That is, the methods have generated successful predictions, and their application has been systematic enough to warrant belief in their continued success. Yet, insofar as they are generated using the interaction picture, these predictions rely on low-order perturbative approximations made at asymptotic times and relative to detector resolution. Moreover, since the full perturbative series diverges, it is difficult to specify what we are successfully approximating in a physical sense. \textit{Can we, or should we, expect more of a theory's predictions? of a relativistic QFT's predictions?}

Agreement (2) belies the different expectations one might have for the form of QFT. Beyond furnishing a method of generating predictions, one pre-Haagian appeal of the interaction picture was its intuitive representation of interaction dynamics. The asymptotic nature of calculations aside, interactions looked familiar: the full state space was time-evolved, from beginning to end, by a specified dynamical operator. As the agreement on (2) reflects, Haag's theorem undermines the interaction picture's ability to represent interactions in this way. And while renormalization delivers the agreed, reliable predictions of (1), it does so at the cost of further complicating attempts to represent interactions as originally hoped. But whether any of this is problematic, and if so to what degree, depends on one's hopes and expectations. \textit{What should count as a fully satisfactory theory? as a fully satisfactory theory of relativistic QFT?} 


In thus organizing the various responses to Haag's theorem, and in bringing the background commitments of the extra-Haagian outlooks into broad daylight, the framework itself does not adjudicate among the responses to Haag's theorem. Rather, it clarifies the various lines of genuine debate, both at the level of disagreement of technical matters of fact, and at the level of deeper methodological commitments. As such, these outlooks are directly implicated in \ref{sec:diagrams}.  Since there is no single ``wner' of QFT, let alone the edifice of science, the information that any inquirer of QFT will want\textemdash be they physicist, philosopher, or mathematician\textemdash will vary according to their own outlook. In our role as `general contractors', we indulge these owners by providing sample advice that either arranges the responses according to their outlook and even provisionally adopts an outlook.

A final note before we proceed. While the responses to Haag's theorem we review below proceed as if it presents a single problem demanding a singular response, this is not universal. In particular, \cite[Chs. 9--11]{ruetsche2011interpreting} notes several potential problems that unitary inequivalence phenomena, Haag's theorem among them, pose to a fundamental particle interpretation. Her response to these is pluralist insofar as she recognizes that ``\textit{different} interpretations will be indexed to---and adulterated by---different aspirations,'' where these aspirations ``adulterate because they arise \textit{along with} particular applications of the theory''  \cite[246]{ruetsche2011interpreting}. She argues that to demand a univocal interpretation, which would require a singular response to Haag's theorem, would ``stym[ie] attempts to harness the theory [...] in fruitful projects of explanation and expansion'' \cite[260]{ruetsche2011interpreting}. While this pluralist response is rich and worthy of discussion, we do not review it here because of its complexity and because it is not centered on Haag's theorem \textit{per se}.

\begin{table}[ptb]
    \centering
    \begin{tabularx}{\textwidth}{X*{5}{>{\centering\arraybackslash}X}}
        \toprule
        Name & \textbf{Assessment} & \textbf{Repair} & \textbf{Renovation or Maintenance} \\
        \midrule
        Earman \& D. Fraser & Interaction picture (IP) rests on a set of inconsistent  assumptions & Abandon the IP; adopt Haag-Ruelle scattering & Interpret and assess formal variations of QFT \\
        \midrule
        Theory Interpreters & In general no number operator for interacting QFT & Develop alternative ontology for QFT & Develop rigorous interacting QFT \\
        \midrule
         Duncan \& Miller & Pre-renormalization IP is inconsistent & Renormalization defuses Haag's theorem by breaking Poincar\'e invariance & Maintain current use of renormalization and regularization techniques\\
        \midrule
          Klaczynski & IP is inconsistent & Renormalization leads us to abandon the IP & Replace the IP's unitary intertwiner \\
        \midrule
        Maiezza \& Vasquez & The full perturbative series has renormalon divergences & Resummation methods (but these are ambiguous) & New resummation methods and physical insights are needed \\
        \midrule
        Kastner & Physics is non-local & Abandon notion of an independent field & Develop direct-action theories \\
        \midrule
        Seidewitz & Use of time as the evolution parameter is at fault &   Revise Wightman axioms to introduce new evolution parameter & Extend parameterized QFT to gauge theories and non-Abelian interactions\\
        \bottomrule
    \end{tabularx}
\caption{Framework application summary}
\label{tab:summary}
\end{table}

\subsection{Earman and D. Fraser}\label{subsec:EarmanFraser}
The philosophical literature on Haag's theorem is significantly influenced by  \cite{FraserDoreen2006Thesis} and \cite{EarmanFraser2006implications}. \cite{FraserDoreen2006Thesis} addresses the significance of Haag's theorem for interacting QFT, and \cite{EarmanFraser2006implications}  argue that there are multiple, significant  implications of Haag's theorem which are worthy of philosophical attention. However, the multiplicity of these implications all stem from the same root: Haag's theorem says that the interaction picture is predicated on an inconsistent set of assumptions. From this root, one stem concerns the implications for scattering theory, another holds implications for the use of non-Fock representations in describing interacting fields, and still another concerns the practitioner's choice among many unitarily inequivalent representations. 

\paragraph{Extra-Haagian outlooks.} 

D. Fraser's dissertation, \textit{Haag's Theorem and the Interpretation of Quantum Field Theories with Interactions}, set the stage for much of the philosophical literature on Haag's theorem over the last sixteen years (\cite{FraserDoreen2006Thesis}). We will therefore give a more detailed assessment of this source than subsequent works.  She writes, ``Haag’s theorem frames answers to the question `What is QFT?' '' (p. 171).  There are several contenders for the prize of being QFT. Fraser divides what we have called \textit{canonical QFT} into three subcategories: unrenormalized canonical QFT, renormalized canonical QFT, and canonical QFT with cutoffs. Contending against these for the prize of being QFT are several different formal variants of QFT: bottom up axiomatic work based on either the Wightman or Haag-Kastler axioms; constructive QFT; and algebraic QFT. Haag's theorem shows us that unrenormalized canonical QFT is predicated on an inconsistent set of initial assumptions, and is therefore disqualified. Neither renormalization nor the introduction of cut-offs are satisfactory in the face of this inconsistency: 

\begin{quotation}
 \noindent   We understand both renormalization and the introduction of cutoffs as expedients that permit the derivation of predictions for interacting systems, but do not address the root cause of the problem (namely, the inconsistency of the initial assumptions of the canonical framework). Renormalized canonical QFT and canonical QFT with cutoffs approximate (in some unspecified way!) a correct, completely finite theory; however, this theory is not string theory or some successor to QFT, but the correct formulation of QFT itself. \cite{FraserDoreen2006Thesis} (p. 174)
\end{quotation}

\noindent Thus, as we seek to answer the question ``What is QFT?'' D. Fraser argues that we eliminate contenders as follows: unrenormalized canonical QFT is disqualified for its inconsistency, demonstrated by Haag's theorem; renormalized canonical QFT is disqualified for being ``mathematically ill-defined'' (p. 173); whilst canonical QFT with cut-offs changes the goalposts. The goal was to integrate special relativity with NRQM, and that requires one to produce a theory on infinite, continuous space, ``not a lattice of finite spacial extent'' (p. 173). This leaves the formal variants of QFT as the only programs aimed at the ``correct formulation of QFT'' to which the other approaches are approximations. 

In subsequent work, D. Fraser is explicit about the role that correcting the inconsistency flagged by Haag's theorem plays in guiding the wise choice of variant of QFT to give philosophical attention: 

\begin{quotation}
 \noindent [C]onsistency is also a relevant criterion because quantum field theory is, by definition, the theory that integrates quantum theory and the special theory of relativity. Consistency is relevant to QFT for theoretical reasons\textemdash not for practical reasons (e.g., the derivation of predictions). As a result, it is necessary to either formulate a consistent theory or else show that this criterion cannot be satisfied (i.e., that there is no consistent theory with both quantum and special relativistic principles). \ldots The formal variant is the only variant that satisfies the criterion; its set of theoretical principles are both consistent and well motivated. \cite{fraser2009quantum} (p. 563)
\end{quotation}

\noindent Thus we have the key dialectical currents of the Fraser-Wallace debate fueling this line of response to Haag's theorem. Bedrock issues about what QFT even is, and what it takes to formulate it in a way that is apt for deep philosophical work on its conceptual-nomological machinery place Haag's theorem in the spotlight: \textit{this no-go theorem regiments answers to the central question at hand, namely, What is QFT?} In contrast, the typical theoretical particle physicist contents themselves with combining SR and QM to the best of their current ability, rather than demanding perfectly coherent and mathematically rigorous unification up front. 


Arguably, the best of their ability has improved over the decades such that, the average theoretical particle physicist could claim, we will inevitably \textit{find} such a complete and honest marriage \textit{if one is forthcoming}. In the meantime, we have successful, if incomplete, marriages of SR with QM with which we can keep experimental and theoretical particle physics going. While it is likewise a live option for D. Fraser (and presumably Earman) that an honest and complete marriage of QFT is not possible, this is a question of \textit{logic}, not practical or empirical feasibility.\footnote{\label{note:FraserAQFThope} Of course, D. Fraser's full corpus on the philosophy of QFT extends beyond the scope of issues in this present paper. In her dissertation she seemed optimistic that, with more time, the as-of-yet unfinished project of constructing a fully rigorous interacting QFT in four spacetime dimensions could be completed. Crucially, she allowed for the possibility that this will require revision to then-extant axiom systems (\cite{FraserDoreen2006Thesis} section 5). Subsequently, she put it to the proponents of renormalized and cutoff variants to wrestle with the question of whether such a completion is impossible: ``Neither the infinitely renormalized nor cutoff variant furnishes an argument that a consistent formulation of QFT is impossible; such an argument would require making the case that the axiomatic program cannot be completed.'' \cite{fraser2009quantum} (p. 563). See \cite{koberinski2023renormalization} for her most recent views on the epistemological status of effective field theory approaches in particle physics.}

We note briefly here that the physicists' attitude of settling for good-enough-to-be-getting-on-with variants of QFT clearly extends far beyond Haag's theorem. As just one other example, theoretical particle physicists are not bothered when they learn that their QFT path integrals cannot actually be integrals, given that there is no measure on the infinite-dimensional space of paths. Sorting out where path integrals properly live in the mathematical universe is a mathematician's job\textemdash not a physicist's. Meanwhile, that these path integrals are currently mathematically homeless in no way diminishes their calculational and conceptual fruitfulness for theoretical physics. Haag's theorem is one result among many indicating that the formal and canonical variants of QFT are not communicating well with each other: either, as Earman and D. Fraser see it, the foundations of QFT are in excellent shape, and they are waiting for more mathematical success in erecting interacting models of realistic matter in a realistic number of spacetime dimensions; or, as Wallace sees it,  QFT needs no foundational work since it does not really even need (mathematically) rigorous foundations. We, the authors, think that this communication breakdown between formal and informal variants of QFT is problematic. We will return to these concerns in section \ref{sec:diagrams}.

\paragraph{Assessment.}  \cite{EarmanFraser2006implications} give us the cold, hard logical truth:  Haag's theorem ``demonstrates that the interaction picture is predicated on an inconsistent set of assumptions. In response to this \textit{reductio} of the assumptions, at least one must be abandoned'' (p. 322). As discussed above, the goal for Earman and D. Fraser is to determine what QFT is, i.e., to completely and honestly unify SR and QM. A successful unification, if it is possible, will at least be able to model interacting systems. Thus, according to Earman and D. Fraser, Haag's theorem is a demonstration of the logical impossibility of modeling interactions using the interaction picture; all that the interaction picture is capable of modeling is, provably, free fields\textemdash a spectacular failure. ``On any reading, Haag’s theorem undermines the interaction picture and the attendant approach to scattering theory'' (p. 333).

\paragraph{Repair.} The repair work needed is  in the task of modeling interactions in QFT. This might proceed  either by abandoning the interaction picture altogether (e.g. with Haag-Ruelle scattering), or else by substantially revising it through giving up one or more of its assumptions. While assessing this second possible response, Earman and D. Fraser find the robustness of Haag's theorem, and its generalizations, to be significant: ``Subsequent no-go results do not show that field theorists do not have to worry about Haag’s theorem because some of its assumptions do not hold in all cases of interest; rather, what the subsequent results show [is] that even more assumptions have to be abandoned in order to obtain well-defined Hilbert space descriptions of interacting fields'' (318). Thus, Earman's and D. Fraser's preferred remedy for Haag's theorem is to abandon the interaction picture altogether in favor of alternative ways of modeling interactions in QFT. They discuss Haag-Ruelle scattering at length, presenting it as a readily available ``mathematically consistent alternative framework for scattering theory'' (p. 333). As a similar type of response, \cite{bain2000against}  argues that the innovations of the LSZ formalism provide the right solution to the heart of the conceptual problem posed by Haag's theorem\textemdash namely, ``the apparent incoherence of using the interaction picture in a situation in which its use dictates its non-existence'' (p. 383).\footnote{\cite{bain2000against} asserts that the LSZ formalism achieves this by replacing the strong convergence condition with weak convergence. \cite[note 23]{EarmanFraser2006implications} disagree, arguing that the information contained in the in and out states of the theory is already contained in the behavior of the interpolating Heisenberg field of Haag-Ruelle. Rather, the weak convergence condition of LSZ takes the further step of explicitly linking the S-matrix and vacuum expectation values of the free field. See \cite[275-289]{duncan2012conceptual} for more discussion of the relationship between LSZ and Haag-Ruelle.}

Note that this accords with Earman and  D. Fraser's stated goal of unifying SR and QM. Regularization and renormalization techniques are not options available to them as they change the game or execute the unification dishonestly, respectively. Thus, the only option is to give up the interaction picture: Haag's theorem does not pose a problem for QFT per se, but it ``does pose problems for some of the techniques used in textbook physics for extracting physical prediction from the theory'' \cite{EarmanFraser2006implications}(p. 306). QFT, therefore, is not to be identified with textbook physics. The textbook physics is one attempt at doing QFT, and Haag's theorem exposes serious problems in that particular attempt. The immediate technical moral of Haag's theorem is that (better attempts at) QFT must embrace the use of unitarily inequivalent representations of the CCR.\footnote{``[A] single, universal Hilbert space representation does not suffice for describing both free and interacting fields; instead, unitarily inequivalent representation of the CCR must be employed'' (p. 333).} The philosophical insight from this technical result is that it designates the role of unitarily inequivalent representations as a distinctive feature of QFT in contrast to QM.\footnote{``Haag's theorem was instrumental in convincing physicists that inequivalent representation of the CCR are not mere mathematical playthings but are essential in the description of quantum fields'' (p. 319). }

In addition, Earman and D. Fraser recognize that an explanation is needed for the success of the interaction picture and perturbation theory in the face of Haag's theorem. They  ``suspect that the full explanation will have a number of different pieces'' and offer their analysis of Haag-Ruelle scattering as one such piece (p. 322). The Duncan and Miller response discussed  below (section \ref{subsec:duncanandmiller}) should be thought of as adding one additional such piece, not refuting Earman and D. Fraser.

\paragraph{Renovation.} As a consequence of this logical diagnosis and logical remedy, Earman and D. Fraser advocate for future philosophical work and resources to be deployed in philosophical projects about formal variants of QFT and their attendant non-interaction picture approaches to interactions. There are at least two such broad philosophical projects that benefit from the ways in which practicing mathematicians have ``digested the lesson of Haag's theorem'' \cite{EarmanFraser2006implications} (p. 334). First there is the project of interpreting the mathematical structures used in algebraic QFT and in constructive QFT. Second, and relatedly, there is the subtle matter of assessing the implications of theorems of axiomatic approaches to QFT for the philosophy of those areas of fundamental physics that make use of QFT.

\subsection{Theory Interpreters\textemdash The Particle Problem}\label{subsec:particleproblem}

There is a segment of the philosophical literature addressing Haag's theorem that is primarily motivated by the question, as put by Ruetsche, ``Is particle physics particle physics?'' \cite[190]{ruetsche2011interpreting}. The issue is whether or not the quantum field theories employed in particle physics admit of particle interpretations. The lines of argument for and against particle interpretations are numerous, and we direct interested readers to \cite{ruetsche2011interpreting} ch. 9, and \cite{fraser2021particles} as entry points to these debates. Here we focus on the hurdle that Haag's theorem poses for particle interpretations.

\cite{Fraser2008-pg} advances the argument from Haag's theorem to the elimination of particles from our ontology. (\cite[20]{halvorson2002no} briefly sketch the initial segment of this argument.) The heart of the argument runs, roughly, as follows. At a minimum, a particle interpretation requires that the theory admits of a formulation in which a number operator exists\textemdash you cannot very well say that the theory says it is possible that matter is fundamentally particles if the theory also says that it is not possible (in theoretical principle) for those particles to be counted. The Fock space representation of a free field yields a well-defined number operator as well as the correct relativistic energies suitable for a particle interpretation. Might we then, by means of the IP, use the Fock space of a free field for representing and counting particle states in an interacting system as well? Haag's theorem blocks this path, since it entails that representations of the ETCCRs for free and interacting systems are unitarily inequivalent. For this and other reasons (see Assessment below), Fraser, Halvorson, and Clifton conclude that a particle interpretation for QFT is a no-go, and so there are no particles. Others  (\cite{wallace2011taking} and \cite{bain2011quantum}) disagree and seek to rehabilitate an emergent particle interpretation despite Haag's theorem.



\paragraph{Extra-Haagian outlooks.} 


This sub-literature is unified by a decidedly \textit{interpretive} task: the authors aim (in this context) to answer questions about what ontological commitments are supported by QFT. Thus, this literature is not ultimately motivated by considerations of theory consistency, application, development and innovation, or conceptual-nomological machinery (``how does this theory work?'') that guide the work of groups surveyed in other subsections. Although such considerations sometime arise in this literature (as, indeed, matters of theory interpretation sometimes arises as a secondary consideration in other segments of the Haag's theorem literature), they do so as steps within larger arguments ultimately aimed at interpretation. (And of course, many authors take up distinctive projects across this interpretation/not-interpretation divide.)

There is a further extra-Haagian outlook operative under the surface of this literature. This outlook concerns the basic nature of the question at stake. For some, the question to be answered is circumscribed as a highly-constrained puzzle: given QFT in a mathematically rigorous form, is it possible to construct a number operator for interacting QFT? Others, in contrast, aim to assess the viability of a particle interpretation understood much more loosely and flexibly from the outset. \cite[857]{fraser2009quantum} explains this difference (situating herself in the former camp) as coming down to solving a problem within a research program versus setting out on a different program altogether: those who, like Wallace and Bain (and in a different way  \cite{feintzeig2021localizable}), aim at rehabilitating (our talk of) particles under a revised, emergent, or approximate notion of particle offer ``a program, not a solution'' to the original puzzle. Ruetsche agrees with Fraser's analysis of the way in which Haag's theorem precludes a particle interpretation of interacting QFT writ large. However, Ruetsche argues for a multifaceted, less ``pristine'' approach to theory interpretation. For Ruetsche, the argument from Haag's against particles applies in certain context, but not in others, such that ``[s]ometimes particle physics is, adulteratedly, particle physics, and that's a good thing'' \cite[260]{ruetsche2011interpreting}.

Thus, there are extra-Haagian disagreements over the method of theory interpretation. There is also an extra-Haagian disagreement over the standards of theory formulation required before the interpretive task can begin. For \cite{halvorson2002no} and \cite{Fraser2008-pg}, the theory to be interpreted \textit{is} the formal, mathematically well-posed version of the scientific achievement under study (\cite{fraser2009quantum}, \cite{fraser2011take}). No-go theorems proved within these formal theoretical structures, then, powerfully and persuasively constrain interpretive options  \footnote{See \cite{FGMhowto} for further discussion on alternative stances toward no-go theorems in general, and Haag's theorem in particular.} \cite{Fraser2008-pg} is explicit on this point. In describing the role of Haag's theorem for blocking one attempt at developing a particle interpretation of interacting QFT, she writes: 

\begin{quotation}
  \noindent  Haag’s theorem pinpoints the source of the problem with the strategy of obtaining a quanta interpretation for an interacting system from the Fock representation for some free system. The consensus among axiomatic quantum field theorists is that Haag’s theorem entails that a Fock representation for a free field cannot be used to represent an interacting field. (p. 847)
\end{quotation}
Moreover, according to Wightman, it is a ``\textit{necessary} condition'' that any physically sensible interacting field theory use a representation of the ETCCRs that is unitarily inequivalent to a Fock representation.
Thus, Haag's theorem poses a problem for a particle interpretation provided that one is committed to mathematical rigor in theory formulation and interpretation. 
\footnote{See  \cite{halvorson2007} p. 731-732 for a concise statement of this attitude towards interpretation and mathematical precision in theory formulation.}

In contrast, \cite{bain2000against} and \cite{wallace2001emergence} take a  more flexible and opportunistic approach to the question of particle interpretations for QFT. To them, Wightman's ``necessary condition'' looks more like an interesting suggestion or a useful starting point. From such a starting point, they turn their attention toward opportunities within the messy details of how interacting QFT gets applied to develop a serviceable notion of particles without meeting the stringent requirements of finding a representation unitarily equivalent to a Fock space representation. The usual response along these lines appeals to a scale-variant or emergent particle interpretation (\cite[124]{wallace2011taking}, \cite{wallace2001emergence})\footnote{\cite{wallace2001emergence} aims to give a concrete treatment of the emergence of particle structure from quantum fields. He aims to explain how the particle concept comes to be useful for interpreting a theory that seems \textit{prima facie} to be about very un-particle-like entities known as fields. This is not a solution to the highly-constrained puzzle (constructing a number operator for interacting QFT) that \cite{Fraser2008-pg} and \cite{halvorson2002no} aim to address.}, or else to an asymptotic notion of particle states within the interacting theory (\cite{bain2000against}).  
While this difference in outlook is often implicit, D. Fraser (in response to \cite{wallace2001emergence}) is transparently skeptical of even the cogency of an ontology with non-fundamental entities or which is not built on an exact similarity relation with the mathematical model at hand \cite[857--8]{Fraser2008-pg}.

\paragraph{Assessment.} For \cite{Fraser2008-pg} and \cite{halvorson2002no}, Haag's theorem is a diagnostic no-go result concerning interpretation: QFT cannot be interpreted as a theory of particles. For instance \cite[23]{halvorson2002no} appeal to Haag's theorem, among other no-go results, when they conclude that the theory of quantum fields ``does not permit an ontology of localizable particles; and so, strictly speaking, our talk about localizable particles is a fiction'', useful though this fiction may be. As \cite[20]{halvorson2002no} briefly describe, Haag's theorem precludes a particle ontology as follows. One first requires that there is a representation of the Weyl relations having a global occupation number operator. These are usually furnished with the interpretation of counting the number of particles in a state (i.e., it is the particle representation discussed in \ref{subsec:history}).\footnote{See \cite[845-846]{Fraser2008-pg} for an exposition as to why these operators at apt for counting particles, rather than counting energy levels or other discrete quantum quantities.} Since number operators only exist for representations unitarily equivalent to a free-field Fock representation (theorem 3.3 of \cite{chaiken1968number}), and since Haag's theorem implies that the latter cannot represent an interacting field, interacting field theory does not admit of a particle interpretation.

\cite[847-849]{Fraser2008-pg} demonstrates the force of Haag's theorem against particle interpretation in considerably more detail. She describes three distinct attempts to generalize a quanta interpretation of free fields to interacting fields, arguing that each of these fail. The first of these attempts is foiled by Haag's theorem; the others fail for different reasons.\footnote{But they, like the approach in \cite{bain2013emergence} to the particle problem, count the evasion of Haag's theorem among their merits.} The first attempt tries to rather straightforwardly extend the notion of particle available to free field theories to interacting ones, just as \cite{halvorson2002no} discuss. In the free case, we obtain a well defined number operator, whose eigenvectors may be interpreted as counting definite numbers of particles so long as we are in a Fock representation. But Haag's theorem shows us that we cannot use a Fock representation of free fields to represent interacting ones.\footnote{See also \cite{heathcote1989theory} pp. 91-97 for a discussion of the consequences of Haag's theorem for Fock space representations and axiomatic approaches to modeling interactions, and see \cite{EarmanFraser2006implications} section 7 for their critical assessment thereof.} And of course, when we are looking for a particle interpretation of QFT, we care significantly more about the interacting case than we do the free case. Given this argument \cite{halvorson2002no} and \cite{Fraser2008-pg} conclude that the fundamental ontology of the actual world does not include any particles. In contrast, \cite{bain2013emergence}, \cite{wallace2001emergence}, and \cite{feintzeig2021localizable} aim (in their own distinctive ways) to recover a non-fundamental, emergent notion of particles.

Thus, for these authors, Haag's theorem sits alongside results like the Reeh-Schlieder theorem \cite{reeh1961bemerkungen}, the Unruh effect (\cite{Fulling1973PhysRevD.7.2850}, \cite{Davies1975}, \cite{Unruh1976PhysRevD.14.870} and discussed more recently in \cite{crispino2008unruh} \cite{earman2011unruh}),  or the Hegerfeldt \cite{hegerfeldt1998causality} \cite{hegerfeldt1998instantaneous} and Malament \cite{malament1996defense} theorems (and extensions thereof \cite{halvorson2002no}) as evidence against a particle interpretation for RQFT. Indeed, canon seems to dictate that these results be discussed together when addressing the particle problem (e.g., \cite[380]{bain2000against} \cite[20]{halvorson2002no} \cite[842]{Fraser2008-pg} \cite[Chs. 9--11]{ruetsche2011interpreting} \cite[\S 2.4]{wallace2001emergence}.)



\paragraph{Repair.} Interpreters addressing the highly-constrained puzzle of constructing a number operator for interacting QFT bite the bullet and accept that there are no fundamental particles. Doing so entails providing an alternative fundamental ontology and alternative explanations of erstwhile ``particle'' physics phenomena. Halvorson and Clifton conclude that ``relativistic quantum field theory does permit \textit{talk} about particles\textemdash albeit, if we understand this talk as really being about the properties of, and interactions among, quantized fields'' \cite[24]{halvorson2002no}. A large part of the repair offered in \cite{halvorson2002no} is an account of how our talk about ``particles'' is still useful, despite its fictitious status. Similarly, albeit with less patience for the talk of particles, D. Fraser concludes that special relativity and the non-linearity of the field equation for an interacting system\textemdash key ingredients for Haag's theorem\textemdash conspire against a quanta interpretation, lending some weak support to a field interpretation instead (``At least on the surface QFT is a theory of fields'' and ``there is no quanta interpretation and there are no quanta'' \cite[857-8]{Fraser2008-pg}.\footnote{However, it should be noted that the most obvious way of constructing a field ontology appears to run into the same problems as the particle ontology \cite{baker-fields}. Baker's argument relies on the restriction that the field wave functionals are square-integrable. If we relax this restriction, then the space of wave functionals will be larger than the space of particle wave functions (see \cite{Jackiw1990} and \cite{Chip2022}, but see \cite{Wallace2006_naivite} for a defense of the square-integrability restriction).}

If, however, we set aside the highly-constrained puzzle within the program of rigorous formulations of QFT, we can pursue alternative programs of particle interpretations rooted in revised formal notions of `particle.'  These alternative programs can be pursued according to various standards of theory interpretation, but in any case one must revise the relevant definitions at play in the no-go arguments for particles. One such option revises the basic approach to scattering theory (Haag-Ruelle) \cite{reedsimon_ii}. In this approach, one jettisons the interaction picture, and instead finds surrogates for interacting states among free states. Thus, the challenges of accessing a Fock space representation for the interacting theory disappear.  Another option, the LSZ formalism, revises the relations between the interacting field and the asymptotic free fields. As shown in \cite{bain2000against}, the LSZ formalism defuses the argument that the interacting field must be free by replacing strong convergence with weak convergence. Thus, Bain argues that the LSZ formalism solves the conceptual problem posed by Haag's theorem. Bain formulates asymptotic raising and lowering operators by taking appropriate limits of the \textit{interacting} raising and lowering operators. These asymptotic operators are then used to define asymptotic particle states. The asymptotic particles states are \textit{not} free field states, and so in principle there are contributions from the multi-particle continuum. Bain (\S 4.2) shows that these contributions are negligible on the relevant time scales, and thus, for all practical purposes, the asymptotic particle states replicate the behavior of free particle states. On the LSZ approach, then, Haag's theorem is circumvented through the move to weak convergence; and the expected phenomenological difference between interacting states and free-like asymptotic states is recovered. The result is a rehabilitation of a particle interpretation through the redefinition of asymptotic states. 

Or we might instead consider an emergent notion of particles in QFT. Instead of proposing repairs to either our ontology or the assumptions of Haag's theorem, for \cite{wallace2001emergence} it suffices to show that a concept of particle can be recovered from a field-theoretic description. Given Wallace's background presumption that approximate concepts are legitimate in physics, it does not matter (as much) that the particle concept he deploys is vaguely defined and applies only to bosonic, massive QFTs.


Finally, for Ruetsche, Haag's theorem shows that, ``within the confines of the interaction picture'' there is no  particle interpretation that can account for the entire micro history of a scattering experiment. But Ruetsche draws attention to those confines and their surroundings, assessing the viability of a particle interpretation in non-interaction picture contexts for QFT scattering theory ($\S$11.1.2) and in cosmological particle creation ($\S$11.2). The areas of physics germane to the question of particle ontologies are far and wide, and there is no reason to think that the Haagian roadblock to a particle interpretation should have the final say. Indeed, Ruetche concludes, ``[s]ometimes particle physics is, adulteratedly, particle physics, and that's a good thing'' \cite[260]{ruetsche2011interpreting}.

\paragraph{Renovation.} Here theory interpreters especially diverge from the rest of our authors. For both those attendant to the highly-constrained puzzle \textemdash given QFT in a mathematically rigorous form, is it possible to construct a number operator for interacting QFT?\textemdash and those who are not, the interpretive projects seek to establish an ontology conditional on an extant form of QFT. This interpretive task is a step removed from questions about how to develop or innovate with our theory. As such, the renovations that theory interpreters propose to improve interpretation will be mediated by how they think the theory should be developed. Given that this latter question falls outside the domain of theory interpretation \textit{per se}, theory interpreters often say comparatively little about what renovations are necessary in response to Haag's theorem.

Nevertheless, we can say a little bit about what renovations theory interpreters might call for. For those trying to solve the puzzle, we need a rigorous formulation of QFT or its successor which we can subject to interpretation. To answer the puzzle, this formulation must be general enough that we can draw interpretive conclusions about all of the phenomena we consider relevant, not just a subset. These calls would seem to make puzzle solvers inclined to axiomatic approaches to theory renovation and, in general, to development of formal variants of QFT.\footnote{Recall D. Fraser's assessment of the search for a mathematically rigorous model of QFT in note \ref{note:FraserAQFThope}. One noteworthy recent development on what kinds of field theories can be rigorously constructed in four spacetime dimensions is \cite{Aizenman-Duminil-Copin-2021}.} On the other hand, those less dedicated to the original puzzle seem to call for renovations of the basic conceptual apparatus, especially the notion of particle. \cite{bain2011quantum} argues that the intuitions of localizability and countability constitutive of the particle notion at stake in these debates are themselves non-relativistic, and urges us to develop one more appropriate for relativistic theories. Similarly, interpretive projects aimed at explicating an emergent notion of particles (\cite{wallace2001emergence}, \cite{feintzeig2021localizable}) advocate long term renovation work in terms of conceptual criteria for particle interpretations.

\subsection{Duncan and Miller}\label{subsec:duncanandmiller}

Tony Duncan \cite{duncan2012conceptual} and Michael Miller \cite{miller2018haag} argue that at least one of the assumptions needed to prove Haag's theorem is violated at some point in the actual process of calculating scattering theory results in perturbative QFT. Thus, in Duncan's words, we can ``stop worrying'' about Haag's theorem. These calculational processes include the methods of regularization and renormalization. \footnote{Regularization is a mathematical procedure used to temporarily control the ultraviolet (UV) or infrared (IR) divergences that arise in QFT calculations. After regularization, it becomes possible to extract finite results from the perturbative series, order-by-order. However any regularization technique will introduce new, arbitrary parameters (regularization scales). Renormalization is a subsequent procedure in which the arbitrary dependence on these regularization scales is absorbed by redefining the bare parameters of the theory. For a fuller explanation, see \cite{collins1984renormalization}. }

A similar view on Haag's theorem is given in \cite{FraserJAMES2017RealProblem}. Here, James Fraser is primarily concerned with the much broader question of the status of perturbative QFT and diagnosing the ``real problem'' with this area of physics. He argues that perturbative QFT is ``a method for producing approximations without addressing the project of constructing interacting QFT models'' (4). \footnote{Miller also addresses the question of what happens after we remove UV and IR cutoffs and thus restore exact Poincare covariance, which J. Fraser does not directly discuss.} As a consequence of this view, the threat of inconsistency posed by Haag's theorem is defused for much the same reasons as given by Duncan and Miller. J. Fraser concludes section 4, ``The perturbative method simply does not assert the set of claims shown to be inconsistent by Haag's theorem'' (18). We set aside \cite{FraserJAMES2017RealProblem} for the remainder of this section since Haag's theorem is not the primary target of that article; but readers interested in the broader question of how to assess perturbative QFT are encouraged to look there for an important contribution on that topic.

\paragraph{Extra-Haagian outlooks.} For Duncan and Miller, it is a given that the interaction picture has been consistently applied in calculating various specific theoretical predictions. For Duncan, QFT as it is used in particle physics---including its use of the interaction picture---is ``the most powerful, beautiful, and effective theoretical edifice ever constructed in the physical sciences'' \cite[iv]{duncan2012conceptual}. His goal, therefore, cannot coincide exactly with Earman and D. Fraser's of strictly unifying SR with QM. Rather, Duncan's broader goal for the book is to provide a ``deep and satisfying comprehension'' of QFT by addressing the important conceptual issues for which the traditional, ``utilitarian'' texts fail to provide careful explanations (pp. iii--iv).

Likewise, Miller is focused on understanding the QFT noted for its empirical successes \cite[802]{miller2018haag}, hence not the QFT of Earman and D. Fraser. Ultimately, Miller's aim is to address ``a general tension that exists in much of the literature engaged in the philosophical appraisal of the foundations of quantum field theory'' (p. 803). This tension is essentially that between Earman and D. Fraser and the theory interpreters' approach to QFT on the one hand and the empirically-tractable QFT on the other: while the former is mathematically rigorous and hence (relatively) straightforward to interpret using standard philosophical tools, it has yet to produce a realistic model, so it is unclear how it could inform claims about the actual world; conversely, while the latter has generated wildly successful empirical predictions, it has done so through changes to the mathematical formalism whose interpretive significance is far from obvious (p. 803).

For both authors, then, the preferred approach seems to be to bring the philosopher's penchant for logical and conceptual clarity to bear on  QFT \textit{as it is actually used}. Both are sure (up to fallibility) that particle physicists succeed at applying QFT for predictions for scattering processes. From this standpoint, Duncan and Miller seek to account for \textit{how} the use of the interaction picture in QFT in practice overcomes the challenge of Haag's theorem, rather than questioning \textit{whether} it can.


\paragraph{Assessment.} 

Both Duncan and Miller recognize the logical nature of the problem posed by Haag's theorem\textemdash i.e., as the conflict between the interaction picture and free-field representation assumptions\textemdash and they take consistency to be a requirement for a viable theory. Indeed, they recognize that a response to Haag's theorem must come from negating at least one of its assumptions. However, unlike Earman and D. Fraser, their goal is to ``explain why theoretical predictions for realistic experimental observables give empirically adequate results'' \cite[803]{miller2018haag}. These predictions \textit{in fact} apply (a version of) the interaction picture, so they cannot simply jettison the framework of the interaction picture altogether. Rather, they aim to identify where within the practical calculational techniques such violations of the assumptions must already take place for some other reasons unmotivated by Haag's theorem.


\cite[pages 359-370]{duncan2012conceptual} and \cite{miller2018haag} each argue that we circumvent Haag's theorem in the messy calculational details of how the interaction picture is used in practice.  In fact, they identify several areas which might demand explanation in the light of Haag's theorem.

First, is the question of why perturbative calculations, built upon the interaction picture, are able to yield empirically adequate results at each finite order. Duncan and Miller argue this takes place in a two-step process.

In the first step, the regularization procedure evades Haag's theorem.  Both Duncan and Miller focus on the case of cutoff regularization, in which short and long distance cutoffs (or, equivalently, high and low momentum cutoffs) are introduced, reducing the theory to a finite number of degrees of freedom. Such a procedure clearly breaks the Poincar\'e invariance (and generally the unitarity) of the S-matrix, and so the regularized theory evades the assumptions of Haag's theorem. 

In the second step, renormalization, the theory's parameters are redefined to remove the dependence on the cutoff scales that were introduced during regularization. As such the Poincar\'e invariance and unitarity of the theory are restored. Each individual term in the renormalized perturbative series is well-defined, and the series can be used up to finite order to make calculations that are compared to experiment. Thus according to \cite[370]{duncan2012conceptual}, ``the proper response to Haag’s theorem is simply a frank admission that the same regularizations needed to make proper mathematical sense of the dynamics of an interacting field theory at each stage of a perturbative calculation will do double duty in restoring the applicability of the interaction picture at intermediate stages of the calculation.''


Note that the violation of Poincar\'e invariance cannot be the explanation in all cases. Many, widely used methods of regularization (such as dimensional regularization) do not break Poincar\'e invariance. This raises the question as to how this story should apply in such cases. In response, \cite[pages 814-815]{miller2018haag} suggests that the basic structure of the argument still stands. Any such regularization technique must provide \emph{some} means for controlling infrared divergences. In so doing, one expects that one or more of the assumptions of Haag's theorem \emph{must} have been violated through the regularization process.

However this raises a second question: after renormalization, the full symmetries of the theory are restored, so why do we continue to get empirically adequate results? After all, restoring such symmetries should once again render the entire formalism ill-defined. Here, \cite[p. 815]{miller2018haag} argues, ``The best available explanation of this fact is that the observables that get compared to experiment are insensitive to the removal of the infrared cut-off.'' \footnote{Miller points to the KLN theorem of \cite{KLN1, KLN2}. The KLN theorem ensures that IR divergences, which appear in individual terms of the perturbative series, cancel out when summing over all terms. The cancellation happens when considering the entire series of terms, not in each individual term. So, while each term might have an IR divergence, the full result, corresponding to a physically measurable quantity, does not.} As \cite[1134]{miller2021infrared} explains, this involves expressing physical quantities in terms of detector resolution, which conditions ``on which kinds of soft radiation can be undetected in the final state.'' This conditioning is appropriate insofar as it merely articulates ``the precise nature of the question we are asking about the outgoing state by executing the particular measuring process that we chose to execute.'' While exactly which assumption is given up varies, the standard toolkits used by physicists to generate practical calculations nevertheless allow them to avoid any possible traps laid by Haag's theorem. 

Even with this in place, Duncan and Miller point to a further theoretical problem with the perturbative series. Although the individual terms in the properly renormalized perturbative series are finite, the series as a whole still diverges \cite[187]{miller2018haag} \cite[374-411]{duncan2012conceptual} \footnote{Arguably, there is still significant explanatory work to do regarding those cases in which the perturbative series is not Borel-summable.}. However, Duncan does not attribute this problem to the consequences of Haag's theorem and Miller goes further arguing the problem is ``not related to Haag's theorem'' \cite[817]{miller2018haag}. Thus for both Duncan and Miller, whatever theoretical problems may remain with the perturbative series, the issue of how it can evade Haag's theorem in particular has been solved.



\paragraph{Repair.} The interaction picture in practice makes use of renormalization and regularization techniques that, in pulling ``double duty'', both produce finite perturbative results and defuse the assumptions of Haag's theorem. The renormalization and regularization repairs also evade Haag's theorem as a positive side effect. No additional repair 
is needed.

\paragraph{Maintenance.} Maintenance is straightforward: maintain current practices of proper use of renormalization and regularization techniques. For the conceptually and logically curious, following in Duncan and Miller's footsteps, this maintenance will likely also involve identifying precisely which assumptions have been violated, and where. While it remains to be shown that the perturbative power series converge and thus correspond to exact non-perturbative objects post-regularization and -renormalization, \cite[page 815]{miller2018haag} contends that this problem, while more serious, is ultimately unrelated to Haag’s theorem (compare \cite[1135]{miller2021infrared}).

\subsection{Klaczynski}

\paragraph{Extra-Haagian outlooks.} 
 Like the authors above, Klaczynski recognizes a bifurcation in QFT research. On the one hand there is canonical QFT. Canonical QFT has been spectacularly successful not only for making precise predictions but, as Klaczynski emphasizes, ``\textit{predict[ing] the existence of hitherto unknown particles} \cite[2]{klaczynski2016haags}. Nevertheless, canonical QFT is mathematically ill-defined: ``canonical QFT presents itself as a stupendous and intricate jigsaw puzzle. While some massive chunks are for themselves coherent, we shall see that some connecting pieces are still only tenuously locked, though simply taken for granted by many practising physicists, both of phenomenological and of theoretical creed'' \cite[2]{klaczynski2016haags}. One major contributor to this ill-definedness is owed to the use of renormalization. 

On the other hand is constructive QFT, which use operator theory and stochastic analysis to attempt to construct models of quantum fields in a mathematically well-defined manner. A number of important results have been obtained by this research program, including many triviality results that can be seen as calling into question basic features expected of any rigorous QFT. Nevertheless, the construction of a renormalizable theory in 4 dimensions---i.e., realistic---has neither been achieved nor seems achievable using current methods \cite[3]{klaczynski2016haags}. That is, no rigorous and realistic model exists. Klaczynski's aim is to reconcile canonical and constructive QFT by elucidating the coherence brought about by renormalization.

\paragraph{Assessment.}
Given the above, Klaczynski approaches Haag's theorem intent on understanding what it says about canonical QFT. At first blush, the result appears to be negative. For instance, the Gell-Mann and Low formula, relating the ground states of the interacting and non-interacting fields, is built on the assumption that the time-evolution operator in the interaction picture, which relates the two fields, is unitary; this is exactly what Haag's theorem rules out (recall note \ref{note:GellMannLow} on the significance and widespread use of the Gell-Mann and Low formula.). However, on closer inspection, the contradiction is resolved, if nevertheless unfavorably: like J. Fraser and Duncan, Klaczynski blames Haag's theorem for the divergence of the perturbative expansion of the Gell-Mann and Low formula. This leads Klaczynski to conclude that the interaction picture is ill-defined and trivial \cite[59]{klaczynski2016haags}.

While he points to similar symptoms as Duncan and Miller, Klaczynski comes to a different conclusion. In his final assessment, the interaction picture itself, relying as it does on a unitary intertwiner, is flawed---even renormalization does not save the interaction picture. While standard regularization methods may break Poincar{\'e} invariance, this is physically unacceptable; moreover, Poincar{\'e} invariance broken by regularization is restored when we take the adiabatic limit. This means that Haag's theorem again applies, albeit now to the renormalized theory \cite[62--3]{klaczynski2016haags}.

\paragraph{Repair.} Like Duncan and Miller, Klaczynski thinks that renormalization procedure still repairs the problem, but gives a different reason for why it works. When we renormalize our theories,  we replace the bare quantities with their renormalised counterparts. This process of renormalization does not merely cancel the divergences, but also fundamentally changes the theory, by bringing about a coupling-dependent mass shift. As a result, the renormalized free and interacting fields now have different masses. Two quantum fields of different mass are overwhelmingly likely to be unitarily inequivalent at any given time. \footnote{ \label{ft:ReedSimon} The mathematical ill-definedness of the renormalized terms makes it hard to prove that renormalized free and interacting fields will be unitarily inequivalent in general. However, Klaczynski uses a theorem X.46 of \cite{reedsimon_ii} to argue that it is ``plausible beyond doubt'' \cite[page 68]{klaczynski2016haags}. Theorem X.46 says that if there are two free, scalar, neutral fields with a unitary intertwiner at some time $t_0$, with \emph{different} masses, then these fields must be unitarily inequivalent. Klaczynski uses the theorem to argue we should not expect a unitary intertwiner at any time between the renormalized interacting and free fields. Recall that Haag's theorem requires precisely such a unitary equivalence between the free and interacting fields at some time (\ref{eqn:unitaryrelationbetween2fields_2}). }, and so the conditions for Haag's theorem do not apply in our renormalized theories.

Nevertheless, the un-renormalized and renormalized theories \textit{are} still related---just not by a unitary transformation, as physicists still believe. Thus, we are working with something \textit{like} the interaction picture insofar as the two theories are still related by an intertwiner of sorts, but \textit{unlike} the interaction picture insofar as this intertwiner is manifestly non-unitary. So whilst renormalization is the correct treatment, physicists have not fully grasped how it allows us to evade Haag's theorem. By using renormalization to fix the divergences,  physicists have ``muddled our way through to successfully applying perturbation theory'' [Klaczynski, personal communication, Sept. 8, 2021]. 


\paragraph{Renovation.} Given that the interaction picture presentation of canonical QFT still dominates, renovation is necessary. This has two parts. First, physicists have misunderstood, or at least misdescribed, the tools they are using when performing scattering calculations with renormalized fields. While Haag's theorem says this tool cannot be the interaction picture, that does not mean renormalization is unconstrained or incoherent. Rather, renormalization ``follows rules which have a neat underlying algebraic structure [the Hopf algebra] and are not those of a random whack-a-mole game'' \cite[page 4]{klaczynski2016haags}.

Second, this program of identifying the structure of renormalization must continue. First and foremost, there are some mathematical lacunae in this process, as well as procedures that are not defined wholly rigorously; these should be addressed. Klaczynski believes that renormalized quantum field theory ``provides us with peepholes through which we are allowed to glimpse at least some parts of that `true' theory' '' \cite[page 4]{klaczynski2016haags}. \footnote{This view can be fruitfully compared to that of \cite{FraserJAMES2017RealProblem}. Both agree that the perturbative renormalization offers a consistent calculational procedure that can both generate empirical predictions and offers some insights about underlying ``true'' physics. However, J. Fraser argues for a selectively realist attitude towards some parts of the effective theory. Klaczynski seems to argue only that this theory might provide some insights about a ``true'' physical theory, one which gives a true representation of the physical system.} 





\subsection{Maiezza and Vasquez}

\paragraph{Extra-Haagian outlooks.}

Like Klaczynski, Maiezza and Vasquez are interested in determining the precise mathematical structure of canonical QFT. However, they seek a ``consistent and generic (non-perturbative) formulation of QFT'' \cite[2]{maiezza2020haag}, in contrast to the perturbative formulation of canonical QFT. Given the centrality of the interaction picture to canonical QFT, the natural starting point for finding such a consistent and generic formulation of QFT would be to use the interaction picture, starting with free fields. But, Maiezza and Vasquez ask, is it possible to build such a non-perturbative formulation from the interaction picture and perturbative renormalization? Or, rather, is something new needed?

\paragraph{Assessment.}
In Maiezza and Vasquez's assessment, Haag's theorem reveals a central problem when we try to improve upon the standard methods of perturbatively renormalized canonical QFT. Haag's theorem is generally associated with infrared divergences of individual terms in the perturbative series (see \cite[p. 319]{EarmanFraser2006implications}, \cite[pp 63-66]{FraserDoreen2006Thesis}, \cite[p. 34]{Sbisa:2020whw} ). However, Maiezza and Vasquez trace the \emph{overall divergence} of the perturbative series back to Haag's theorem. After all, ``Haag’s theorem is non-perturbative and thus, for a finite value of the coupling,  perturbative renormalization cannot be its solution'' \cite[page 7]{maiezza2020haag}.

We see this problem clearly when we consider the entire perturbative expansion series. The problem stems from vacuum polarization: because of vacuum polarization, the full power series expansion diverges\footnote{This was first noted in \cite{Dyson_divergent} concerning QED.}. Maiezza and Vasquez demonstrate that renormalon singularities arise in the total perturbative series, and trace these back to Haag's theorem. These renormalon singularities ``are the concrete manifestation'' of the impossibility of generating an unambiguous finite result in such cases \cite[page 10]{maiezza2020haag}. 
\footnote{
Renormalons are singularities that arise as a function of the complex transform parameter when a formally divergent series is summed using Borel summation, see \cite{Beneke_renormalons} for a review. Infrared renormalons are generally associated with long-distance or low energy non-perturbative effects, emerging from the accumulation of low-momentum virtual particles in loop corrections, whilst ultraviolet renormalons are generally associated with artefacts introduced by the renormalization procedure itself (see also \cite{Parisi_Renormalons_1979}). 
}
For non-asymptotically free models, Maiezza and Vasquez associate the UV renormalon divergences with Haag's theorem. They argue that the same would hold for IR renormalons in asymptotically free models.\footnote{Roughly speaking, in an asymptotically free quantum field theory, the interaction strength becomes weaker as the distance scale decreases or the energy scale increases. See \cite{Politzer_AF_1973, Gross_Wilczek_AF_1973, Politzer_AF_1974, Wilczek_2005} for further details.} Thus, either kind of renormalon divergence can be traced back to Haag's theorem in some cases. \footnote{Maiezza and Vasquez' make no de facto distinction between UV and IR renormalons (c.f. \cite{Parisi_Renormalons_1979}), instead showing how renormalons emerge from the renormalization group equations within the resurgence framework, without reference to specific Feynman diagrams (see also \cite{Maiezza_resurgence, Maiezza_nonwilsonian, Dunne:nonperturb}). As such, either UV or IR renormalons may be interpreted as a symptoms of Haag's theorem, depending on the theory. For instance, in an (asymptotically free) Yang-Mills theory, IR renormalons would correspond to singularities in the semi-positive axis of the Borel plane, and render a Borel-Laplace resummation ambiguous. In this case, Maiezza and Vasquez argue that such IR renormalons can be traced back to Haag's theorem.}

Attempts to generate a finite result for the whole series through analytic continuation methods (such as using a Borel-Laplace transformation) necessarily rely on a choice of arbitrary constants. Maiezza and Vasquez describe these dependencies on an arbitrary choice as ``renormalization ambiguities.'' Thus, because of these ambiguities, ultimately arising from Haag's theorem, the interaction picture with perturbative renormalization cannot lead to their desired non-perturbative QFT.

\paragraph{Repair.}
According to Maiezza and Vasquez, there is no straightforward way to repair canonical QFT to address this problem. There remains a concerning flaw in perturbative renormalization, namely, the renormalized series' dependence on an arbitrary choice of constant. As Maiezza and Vasquez put it, perturbative renormalization ``cannot be a self-consistent cure, because perturbative renormalization needs to be completed, or in practice resummed'' \cite[page 10]{maiezza2020haag}.

In particular, they argue that the repair suggested by Klaczynski---essentially, to replace the interaction picture's unitary intertwiner with a non-unitary one---cannot work. The Dyson operator relating the free and interacting fields is unitary by construction, in such a way that simply undoing the time-ordered product present in its definition (as suggested by \cite{klaczynski2016haags}) does not therefore make it non-unitary \cite[5]{maiezza2020haag}. \footnote{Maiezza and Vasquez  point out that it can be written in a manifestly unitary form (see \cite[p. 80]{maiezza2020haag}). More generally, as a quantum mechanical evolution operator, the Dyson operator must be unitary, otherwise it would not preserve the normalization of quantum states. }

Something more is needed to repair the interaction picture, in particular to make it suitable for calculations using non-perturbative QFT.


With that said, Maiezza and Vasquez are quick to note that the practical uses made of the interaction picture with perturbative renormalization need no repair: they evade Haag's theorem. Haag's theorem is, after all, a non-perturbative result, in that it applies to the \textit{entire} power series expansion, not just the perturbative expansion at some finite order. Because the power series expansion is asymptotic, truncating at finite orders can give meaningful approximations, while also missing out on the non-perturbative effects that lead to the complete series' divergence. Thus, while regularization and renormalization do not `fix' the problem posed by the divergences tracked by Haag's theorem, this is essentially because Haag's theorem does not apply in cases where the series is truncated. Maiezza and Vasquez hold that perturbative renormalization fixes the divergences of individual terms within the series, but it does not address the non-perturbative divergences of Haag's theorem. \footnote{\cite{miller2018haag} and Maiezza and Vasquez agree about how perturbative renormalization works to fix the divergences of individual terms; however, Miller does not connect this to the overall divergence of the perturbative series)}.




\paragraph{Renovation.} They suggest three forms of renovation. First, in the short run, research efforts should aim at improving our understanding of the non-perturbative singularities that arise, such as renormalons. Further research should be directed into mathematical techniques to better understand neglected non-perturbative effects (for example \cite{Maiezza_resurgence, Maiezza_nonwilsonian}). These include resurgence, a method for reading off certain non-perturbative effects from the perturbative expansions (for a review, see \cite{Dunne:nonperturb}). 

Second, further work is needed work \emph{within} canonical QFT to account for non-perturbative effects. Ultimately, new physical insights are needed to complete quantum field theory. These must move beyond simply reformulating the perturbative quantum field theory framework [Maiezza and Vasquez, personal communication, Sept. 8, 2021].  

Third, Maiezza and Vasquez suggest physicists might need to consider more radical models, perhaps rejecting some of the standard assumptions of canonical QFT. They suggest one such possible model, dimensionally-reduced QFT (DRQFT) \cite{maiezza2023DRQFT}, in which the spacetime dimensions are dependent on the energy scale. Such an approach requires the introduction of a new mass-energy scale to encode some further unknown physics at high energy scales. \footnote{As such, this is not intended as a candidate for something like a ``fundamental theory''. Motivated by approaches in quantum gravity, they interpret the scale as encoding the effects of a classical field of geometric origins that weights long and short contributions differently. So the sense in which unknown physics is encoded in this model is quite different from the approach in Wilsonian renormalization.} However, something analogous to the interaction picture can be salvaged in such a theory. They demonstrate that such a theory can be made finite in perturbation theory, free of UV renormalon ambiguities. The key is that such a theory avoids one of the assumptions of Haag's theorem: the vacuum states of DRQFT are not Poincar{\'e}-invariant (see \cite[pages 17-18]{maiezza2023DRQFT}).



\subsection{Kastner}\label{subsec:Kastner}

\paragraph{Extra-Haagian outlooks.}
\cite{kastner_directaction}'s outlook differs from the previous authors. Insofar as she seeks the correct formulation of our theory of quantum phenomena, she sits with the physicists above. However, she thinks this correct formulation requires a fundamental reinterpretation of quantum phenomena. Here she goes further than the theory interpreters, who focus on new interpretations of existing physics.
Kastner advocates replacing QFT with a direct-action theory (DAT). As such, she has \textit{already} given up on mediating interactions locally in favor of nonlocal, direct interactions between field sources. This is a profound departure from QFT.
 



\paragraph{Assessment.} Kastner locates the problem further back conceptually. Rather, Kastner begins with the commitment that QFT is wrong from the outset. As such, her assessment of QFT in the light of Haag's theorem is grim: ``QFT is not the correct model; a different, yet empirically equivalent, model is needed'' (57). The problem lies with the foundational assumption of QFT\textemdash namely, that we use fields as mediators between particles (crudely, as ``a `bucket brigade' that is invoked in order to restrict causal influence to a local, continuous conveyance from spacetime point to spacetime point'' (59)). 
Thus, Haag's theorem ``simply tells us what we already know: the interaction picture of quantized fields does not really exist'' \cite[58]{kastner_directaction}. Haag's theorem is therefore an additional motivation for abandoning QFT in favor of DATs. 

\paragraph{Repair.} The treatment for such a deep problem is a major change in the modeling procedure for this area of physics. If we accept that Haag's theorem shows us that it simply does not work to model interactions by field operators creating and destroying Fock space states, then, Kastner urges, we are to instead revive direct-action theories (DAT), in which we model interactions through a direct, nonlocal interaction between sources of the field \cite[57]{kastner_directaction}.

Rather than modeling the evolution from free to interacting states (as in the interaction picture of QFT), on the direct-action approach, we do away with fields as mediators of interactions. Instead, we posit a direct, nonlocal  between particles or sources of the field, without needing any intermediary field, so that the issue of constructing a unitary operator to map between free and interacting theories does not arise. More fundamentally, in a DAT, quantum mechanics is only incorporated at the level of the sources and receivers of a field, and is not enforced through the imposition of canonical commutation relations on the fields (see \cite{davies1970quantum}). As such, there is no requirement of local commutativity between the field operators at spacelike separations, unlike in ordinary QFT (recall axiom \ref{axiom:localcommutativity}).


Kastner believes that this is a viable repair because she is convinced that DATs\textemdash which differ radically from QFTs\textemdash are empirically equivalent to QFTs. This places significant weight on (purported) demonstrations of the equivalence of direct-action theories and QFT (e.g., \cite{Narlikar1968}). If the demonstrations are sound, \textit{and} they extend such that all currently used QFTs are approximations to some DAT, then they would go some way towards explaining how calculations in renormalized quantum field theories are able to generate successful predictions: QFT's ``mathematical inconsistencies can be rendered inconsequential since they can be understood as arising from its ``makeshift,' nonfundamental character'' [63].\footnote{However, this is a non-trivial \textit{if}, especially given the recent results suggesting that empirical equivalence is a less straightforward concept than philosophers often presume. See \cite{weatherall_equiv1} and references therein for more detail.} 

Kastner further invokes the abandonment of the idea of the ether due to the Michelson-Morley experiment to justify abandoning QFT, saying that abandoning interactions through local quantum fields in response to Haag's theorem would have an analogous ``interpretive elegance'' (58).  She likens Haag's theorem to the EPR experiment insofar as it presents a serious challenge to the assumption that the laws of nature are entirely local. \footnote{Recall that Haag's theorem depends on an axiom of local commutativity (axiom \ref{axiom:localcommutativity}). Kastner's approach can evade Haag precisely because of this non-local character, avoiding such a local commutativity requirement.}
We repair all these problems, including those raised by Haag's theorem, by abandoning locality in favor of DATs. 




\paragraph{Renovation.} The prescribed renovation, on Kastner's view, is to develop a research program for DATs. This research program includes the rejection of alternative responses to Haag's theorem such as Haag-Ruelle scattering and constructive QFT; Kastner views these workarounds as ``\textit{ad hoc}, approximate, or partial measures'' (63). Future work in the research program would need to assess the question of the calculational, pragmatic viability of DATs as well as the crucial need to check thoroughly for empirical equivalence with all of the accepted theoretical results of ordinary QFT. Kastner cites work from Narlikar and Davies developing a DAT empirically equivalent to QED; more work is needed to determine if other sectors of the Standard Model of particle physics can be recovered in the DAT approach. Further rennovation efforts should also explore the possibility that a DAT approach would lead to novel, empirically testable predictions, as much of theoretical particle physics is currently disappointed with the lack of direction from particle accelerator experiments for new lines of research.

Given the wide extent of Kastner's proposed work\textemdash not least, replacing the entire Standard Model with DATs\textemdash it would perhaps be more appropriate to describe the proposal as an entirely new foundation than as a renovation of the one that exists.




\subsection{Seidewitz}\label{subsec:Seidewitz}


\paragraph{Extra-Haagian outlook.} Seidewitz's outlook is interesting in that it substantially coincides with several of the others already discussed, while at the same time departing from them. First, he seems to think of theory interpretation in a way similar to Earman and D. Fraser and the theory interpreters. Like the latter, he appears to take mathematical rigor as a precondition for providing a satisfactory interpretation \cite[369]{seidewitz2017avoiding}. However, Seidewitz sees axiomatic QFT built on the Wightman axioms as an \textit{incorrect} marriage of special relativity with quantum theory. In the light of this, Seidewitz thinks that these axioms will need to face significant revision. \footnote{See \cite[Section 5]{FraserDoreen2006Thesis} for an extended discussion of the sense in which Fraser takes the axioms of QFT to be provisional. }
Seidewitz departs from the puzzle-focused interpreters insofar as a univocal characterization of particle-ness is not forthcoming. This is because particles now come in two varieties\textemdash virtual and real\textemdash and a new definition of `particle' will have to account for the significant mathematical distinction between the two.\footnote{In particular, we should expect (an analog of) the (global) number operator characterization to hold only for real particles, given that the space-like evolution possible for virtual particles prevents their characterization as Fock space states to be created or destroyed (where this is done by operators used to define number operators).} 

Second, Seidewitz's outlook coincides with Kastner's. On the one hand, processes can evolve in a space-like way. That is, physics is non-local. Specifically, virtual processes need not evolve in a local fashion. On the other hand, particles come in two varieties\textemdash virtual and real. Indeed, the two agree on how these varieties are constituted with respect to locality. However, Seidewitz disagrees with Kastner that the non-locality of physics is a reason to abandon field theories. Rather, he suggests an alternative formulation of quantum field theory.


\paragraph{Assessment.} Thus, \cite{seidewitz2017avoiding}'s assessment of Haag's theorem is that it is the symptom of a larger problem, namely the inconsistent treatment of time in traditional QFT. Seidewitz reads Haag's theorem as a corollary of two previous results, where these previous results are the more significant. First, let $\hat{\psi}_{1}(t, \mathbf{x})$ and $\hat{\psi}_{2}(t, \mathbf{x})$ be field operators defined in Hilbert spaces $\mathcal{H}_{1}$ and $\mathcal{H}_{2}$, respectively, and suppose there exists a unitary operator $\hat{G}$ such that, at a specific time $t$, $\hat{\psi}_{2}(t, \mathbf{x})=\hat{G}\hat{\psi}_{1}(t, \mathbf{x})\hat{G^{-1}}$. Then (Theorem 1) the equal-time vacuum expectation values of the fields at time $t$ are the same. Second (Theorem 2), a given field's two-point expectation values satisfy a certain equation (Eq. (3) in Seidewitz) if, and only if, that field is free. Letting $\hat{\psi}_{1}$ be a free field and $\hat{\psi}_{2}$ a field related to the former as in Theorem 1, Haag's theorem merely observes that $\hat{\psi}_{2}$ also satisfies Theorem 2's equation for spacetime points at time $t$ (by Theorem 1) and, by the Lorentz-covariance of $\hat{\psi}_{2}$ and analytic continuation, extends $\hat{\psi}_{2}$'s satisfaction of the equation to any two positions.

As Seidewitz sees it, Haag's theorem ``essentially relies on a conflict between the presumption that the fields are Lorentz-covariant and the special identification of time in the assumptions of Theorem 1'' \cite[360]{seidewitz2017avoiding}. This special identification is apparent in that the time-evolution is given by the (frame-dependent) Hamiltonian. This requires particles to evolve on time-like trajectories. Essentially, the (frame-dependent) Hamiltonian operator is playing double duty as the generator of time translation, in addition to the generator of state evolution, with $t$ as the evolution parameter for each. However, because of this, the translation group of $t$---the unitary operators $\hat{G}(t)$---coincides with the group of Lorentz transformations, which guarantees that one can extend the coincidence $\hat{\psi}_{2}(t, \mathbf{x}) = \hat{G}\hat{\psi}_{1}(t, \mathbf{x})\hat{G^{-1}}$ at a specific time $t$ to \textit{all} times $t$.

\paragraph{Repair.} The immediate repair is to break the identification of the Hamiltonian with the generator of time translation and the energy observable. Seidewitz proposes we do so by dropping (i) the requirement that there is a unique Poincar{\'e}-invariant vacuum state and (ii) the spectral condition, i.e., the requirement that states are \textit{on shell}. Essentially, this involves dropping the assumption that states transform according to an \textit{irreducible} representation of the Poincar{\'e} group. This frees us up to define vacuum states as \textit{relative} to the choice of a Hamiltonian operator, where the Hamiltonian is no longer equated with the time-translation generator. Instead, a Hamiltonian is required only to be Hermitian, commute with all spacetime translations, and have a unique null eigenstate. \footnote{Note that this model is more general than standard AQFT. The Hamiltonian operators do not need to be positive definite or represent the energy observable. Likewise, each Hamiltonian will have a corresponding ``vacuum state'', but these need not correspond to the conventional sense of the ``ground state''.}

Then, one can consider the one-dimensional group generated by each Hamiltonian, where $\lambda$ (instead of $t$) is the (now frame-independent!) evolution parameter. (Physically, $\lambda$ can be thought of as the parameterization of a particle's path in spacetime---now unburdened from the constraint that the path be timelike.) Haag's theorem is evaded by limiting the equality of vacuum expectation values (Theorem 1) to equal values of $\lambda$, which is neither surprising---each Hamiltonian has a unique Poincar{\'e}-invariant vacuum state---nor problematic---one can no longer use Lorentz transformations to extend the equality any farther. In particular, one cannot demonstrate that the fields coincide (Haag's theorem). Thus, interacting fields need not coincide with free fields, and yet a version of the interaction picture is reinstated since interacting fields can be related to free fields by a unitary intertwiner \cite[369]{seidewitz2017avoiding}.



\paragraph{Renovation.} Like Kastner, the renovation Seidewitz proposes is far-reaching. Most important, it will involve expanding the reach of parameterized QFT. \footnote{In one sense, Seidewitz has formulated a more general framework than standard AQFT, one in which the vacuum state and Hamiltonian are both representation-dependent. In standard AQFT models, the energy operator is always the Hamiltonian and the evolution parameter is always time. Seidewitz' framework allows for additional models in which these alignments are broken.} As it stands, the theory does not cover gauge theories or non-Abelian interactions, nor does it resolve all of the problems with standard QFT. As such, significant work is required before this approach is of practical use. Nevertheless, insofar as it is conceptually closer to constructive QFT, one might reasonably suspect that the renovations required will be less thorough-going than those proposed by Kastner.




\subsection{Framework Results Summary}

Our framework accomplishes two main objectives. First, it maps out an otherwise wild jungle of scholarship on Haag's theorem. This organization is especially helpful as groundwork for making the interdisciplinary exchanges of ideas on Haag's theorem more efficient: it is not always easy for physicists, philosophers, and mathematicians to communicate. This is in part due to appropriately different aims, as well as differences in how these separate communities have developed their own forms of discourse, vocabulary, and standards of rigour. But at some point, the disparate aims of these communities reconnect, and facilitating interdisciplinary communication becomes essential to progress.

Thus, second, our framework goes beyond the organization of viewpoints on Haag's theorem by pulling each viewpoint's underlying disciplinary and methodological values to the foreground (see again table \ref{tab:EHO}). How one diagnoses and treats Haag's theorem is profoundly influenced by what one expects of QFT, and of the interaction picture in particular, as well as one's expectations of theoretical physics in general. One's expectations for the mathematician's and the philosopher's appropriate relationship to theoretical physics also has an influence.  


The viewpoints above are characterized by several different purposes. We schematically organize these purposes in figure \ref{fig:david}. Duncan and Miller are both concerned with a specific explanatory task: how, in fact, do physical predictions evade Haag's theorem? This takes for granted the acceptability of current predictions and appropriateness of QFT's scale-relative form. While Klaczynski makes similar observations, he moreover desires a QFT form that makes the algebraic structure of interaction dynamics mathematically manifest. Similarly, Maiezza and Vasquez hope to precisely characterize the non-perturbative structure of QFT. Klaczynski and Maiezza and Vasquez therefore coincide with Duncan and Miller in their attention to canonical QFT, but diverge in setting their sights on an unambiguous and non-perturbative structure underlying it. In contrast, Kastner and Seidewitz divert their attention from canonical QFT altogether in search of a more satisfactory marriage of special relativity and quantum theory.Finally, Earman and Fraser and the theory interpreters are doing the distinctly philosophical work of sorting out the downstream interpretive implications of Haag's theorem in conversation with the wider goals and conceptual foundations  of QFT.

\begin{table}[ptb]
    \centering
    \begin{tabularx}{\textwidth}{p{3.7cm}X}
        \toprule
        \textbf{Name} & \textbf{Extra-Haagian Outlooks} \\
        \midrule
        Earman \& D. Fraser & Which QM and SR assumptions characterize QFT? We desire well-defined representations of interactions, which requires carefully tracking our assumptions as is done in the formal variants of QFT. The process by which predictions are actually generated is irrelevant. \\
        \midrule
         \makecell{Theory Interpreters:\\ constrained puzzle}~~ & Should we interpret QFT interactions as the interaction of (free field) particles? Only the formal variants of QFT are ripe for interpretation. The calculational details by which predictions are actually generated is irrelevant. \\
        \midrule
        \makecell{Theory Interpreters:\\ beyond the puzzle}~~ & What is the particle content of particle physics? No restrictions are placed on representations, and the actual predictive process may be, and often is, relevant. \\
        \midrule
       Duncan \& Miller & Why are predictions in QFT reliable? It is to be shown that the actual process of generating predictions in canonical QFT, via renormalization and perturbation theory, is consistent. The details of the actual predictive process are not only relevant but central to this project. \\
        \midrule
        Klaczynski & What is the coherent structure underlying renormalized canonical QFT? We desire well-defined representations of interactions that can deliver unambiguous non-perturbative predictions. \\
        \midrule
        Maiezza \& Vasquez & To what extent can we recover an unambiguous non-perturbative formulation of QFT from the perturbative formulation of canonical QFT? We desire a well-defined representation of QFT that can address non-perturbative singularities. \\
        \midrule
       Kastner & How should QM be extended to the phenomena described by QFT? We desire a well-defined theory  whose predictions coincide with those of canonical QFT, but which eschews the CCRs. \\
        \midrule
        Seidewitz & How should SR be incorporated into QFT? We desire a well-defined theory that treats time analogously to the spatial coordinates. \\
        \bottomrule
    \end{tabularx}
    \caption{Extra-Haagian outlooks summary.}
    \label{tab:EHO}
\end{table}

\section{Where and how to make future progress}\label{sec:diagrams}



Our ultimate goal has been to provide guidance on QFT's construction in light of Haag's theorem. So what guidance does the framework provide? The reader might expect us to promote a QFT pluralism along the following lines. Section \ref{sec:lit} suggested there were substantial disagreements about how, or even whether, the QFT apparatus works. These disagreements appeared to concern the propriety of our present theory. Had the scrutiny of setion \ref{sec:framework} born this out, adjudication would have appeared necessary to ensure the security of our \textit{present} theory. But this was not the upshot of section \ref{sec:framework}. Instead, our framework revealed that respondents' purposes were not uniform. Moreover, the response-pluralism was decidedly future-directed in that the divergences were easier to appreciate when viewing the proposed renovations. In this way, different beliefs about how to pursue a \textit{future} theory may be driving the differing assessment's of Haag's theorem. We do not even know what this future theory is, let alone how we will discover it. Thus, humility would suggest leaving each of these researchers to their own pursuits.

We find this pluralistic response tempting, and we think there is much right in it. Nevertheless, we do not think it is sufficient for providing guidance on QFT's construction. In our opening vignette, we imagined researchers encountering Haag's theorem, and we asked how they should respond. While a reader could have, understandably, chalked it up as mere rhetoric, we take the vignette seriously: how should \textit{actual} researchers respond to the theorem? Actual researchers and actual research communities are subject to resource constraints, and they are thus forced to make choices. Indeed, some of these choices are \textit{difficult}, in the sense that there are few and fallible reasons to prefer one choice over another. Yet such choices are made, and reasons for these difficult choices are given. Consider that each respondent we survey not only pursues their own path, but moreover defends their decision to do so. While pluralism is right in that there appear to be multiple reasonable choices, it does not help us human decision-makers reason through the choices we actually face.

But we also do not think methodological monism\textemdash picking just a single path\textemdash is necessary for guidance at this stage. That is, we do not think the framework must force a choice upon researchers in order to provide guidance to them. Rather, we take it that any information that helps researchers make the difficult decision(s) of what path(s) to pursue constitutes guidance. Here we content ourselves with two kinds of guidance. In \ref{subsec:distinctlines}, we use the framework to separate the wheat of \ref{sec:lit} from the chaff by highlighting the very real choices of goal that underlie the more superficial disagreements. The above respondents have and provide reasons for these choices, and, given that the future of QFT is uncertain, these reasons are defeasible. Researchers are thus guided by the framework to choices of consequence to the future of physics and, \textit{ipso facto}, to reasoning of consequence.

In \ref{subsec:calltoaction}, we show how the framework, more generally, guides researchers from more traditional philosophical questions about our foundations to more methodological ones. Traditionally, philosophers have asked pointed questions about the current status of specific concepts or claims\textemdash questions about rigor and mathematical existence, about theory interpretation, about theory essence. But as section \ref{subsec:distinctlines} shows, these traditional questions morph into questions about what kinds of reasoning we can expect to be promising. We explain how the framework shifts our attention in this way, how these new questions differ from past debates about foundations, and sketch one path forward for a philosophy of physics focused on these new questions.


Again, our ultimate goal is to provide genuinely useful and usable guidance on the construction of QFT, however meager or provisional. The construction of QFT is the product of many choices and much reasoning about them. We emphasize that such reasoning is \textit{already} taking place in support of choices that we \textit{actually} face as we strive to bring greater specificity to our theoretical knowledge. This reasoning is consequential, whether it takes place in print, project selection, resource allocation, or post-conference debate. It is, therefore, reasoning worthy of attention.

\subsection{Three distinct lines of debate}\label{subsec:distinctlines}

\paragraph{Does the interaction picture exist?} 
The most salient of disagreements concerned the status of the interaction picture. On the one hand, Earman and D. Fraser, Klaczynski, and Maiezza and Vasquez each suggest its existence is undermined by Haag's theorem. On the other hand, Duncan and Miller both suggest it exists and its use is justified. The difference here is, of course, that Duncan and Miller speak of the interaction picture at intermediate stages in a perturbative calculation, where regularization ensures finitely many degrees of freedom. In those contexts, the interaction picture is perfectly well defined. Our framework was not necessary for this observation, nor was an appeal to extra-Haagian outlooks.

Yet there is more to say here, as revealed by the framework. First, the two groups differ substantially on what QFT should look like and, therefore, what counts as a sure foundation. Duncan and Miller are focused first and foremost on the reliability of the predictions generated by ``the algorithm of quantum field theory'' \cite[1135]{miller2021infrared}. While Miller clearly believes explicit identification of a non-perturbative structure is desirable \cite[818]{miller2018haag}, he nevertheless argues that the perturbative nature of our data does not undermine a realist interpretation of the theory \cite[1135]{miller2021infrared}. Thus, if the ultimate goal is to characterize a theory about which we can be realists, it would seem canonical QFT has a sufficiently sure foundation. Earman and D. Fraser, Klaczynski, and Maiezza and Vasquez do not seem similarly satisfied with canonical QFT's present perturbative structure. 

Second, members of the former group are not entirely alike in their accounts of the consequences of Haag's theorem. At issue is how deeply the authors trace Haag's theorem's consequence for the existence of a unitary intertwiner. For Earman and D. Fraser, the focus is on the existence of a unitary intertwiner between the free field and the \textit{unrenormalized} interacting field. In this case, an IR cutoff suffices to defuse Haag's theorem and ensure the existence of an unitary intertwiner. For Klaczynski and Maiezza and Vasquez, the focus is on the existence of a unitary intertwiner between the free field and the \textit{renormalized} interacting field. In this case, substantive engagement with renormalization techniques will be necessary for settling the question of whether a satisfactory intertwiner exists. Thus, applying our framework to this debate over the IP reveals that the real debate to be had is over the exact version of the fields for which we are in want of an intertwiner.

The framework therefore reveals two substantial choices underlying the superficial disagreement about the IP's existence. First, must we explicitly identify a non-perturbative structure in order to be satisfied with our theory's representation of scattering phenomena? Second, why, hence where, do we want a well-defined unitary intertwiner? In particular, is it enough for an intertwiner to relate the (IR safe and scale-relative) observables remaining after renormalization to appropriate free-field correlates (i.e., unitary in that amplitudes appropriately coincide), or do we want an intertwiner to find correlates for the entire free-field state space and observable structure in a representation of the interacting field (i.e., unitary in that it is a dynamical operator defined on a single state space)?


\paragraph{Is particle physics particle physics or not?}
Section \ref{subsec:particleproblem} also revealed disagreement about the theorem's consequences for a particle interpretation. While many theory interpreters, including Halvorson and Clifton, take it as evidence against a particle interpretation, Bain and Kastner, among others, do not. As we discussed above, the primary difference between (e.g.) Halvorson and Clifton and Bain is how wedded we are to the free-field conception of particles. Halvorson and Clifton are so wedded. Bain, instead, is not, for he aims to capture the sense in which particle physics \textit{is} about particles\textemdash if this is not the free-field one, so much the worse for the free-field conception.

The reader might think Kastner fits in with Bain. This is not correct, however. Kastner not only recommends a different interpretation of particle physics from Halvorson and Clifton but, as her proposed renovations reveal, suggests QFT must be replaced altogether. That is, rather than derive an interpretation from a presupposed formalism, she instead derives a formalism from a presupposed interpretation. Among other things, this interpretation is committed to a particle conception of particle physics. This commitment ultimately derives from a belief that relativistic quantum mechanics should behave more non-locally than QFT does, as, she thinks, non-relativistic quantum mechanics counsels. 
This suggests two disanalogies with Bain. First, the comparatively unmoored approximations of canonical QFT are not enough\textemdash we should demand exact calculations, or at least approximations to exact solutions. Second, a fully satisfactory theory should wear its metaphysics on its face more plainly than does canonical QFT.

As above, there are substantial choices underlying the superficial disagreement about particle existence. First, is the conception of particles inherited from the free-field formalism a satisfactory one for interacting QFT? Second, at what point should we be satisfied with a mathematical formalism in its representational capacity? In particular, is it (at least potentially) metaphysically significant that the calculations delivered by canonical QFT are not approximations to exact solutions?

\paragraph{What is relativistic QFT?} Spelling out the renovations proposed also reveals commitments we might have otherwise missed (see figure \ref{fig:disciplines}). Focusing just on broad strokes aims as a way to separate various responses to Haag's theorem, Kastner and Seidewitz would seem far removed from the puzzle-focused theory interpreters\textemdash the former propose new theoretical forms for QFT, while these interpreters restrict themselves to the formal variants. However, it is worth noting that these groups of authors seem to share the same outlook on axiomatization. In particular, both seem at least weakly committed to claiming that QFT phenomena has a structure that can be captured in a closed form as something like a finite set of axioms \textit{and that we can actually specify these axioms}. Likewise, the puzzle-focused interpreters who believe particles have been thoroughly banished believe not only that QFT can be so captured but that it has. In contrast, Earman and D. Fraser appear committed only to the belief that axiomatization could be \textit{helpful} for comprehending current QFT, or even for finding a more scrutable form. D. Fraser, at least, is quite clear that CQFT does not need to build a realistic model in order to have been useful\textemdash its utility lies rather in the fleshing-out of the relationship among the axioms. While Kastner and Seidewitz certainly could take such a view, the emphasis they place on the correctness of their chosen axioms suggests this is not their current conception.

This is useful, too, for understanding the varied responses to the question of which assumption of Haag's theorem should be dropped. Kastner, Seidewitz, and the most staid puzzle-focused interpreters each have a clear answer to this question. Things get murkier as we consider the others, however. Not only do none of them explicitly affirm some set of precise assumptions, they are not entirely clear about which assumptions are problematic and how. This comes in degrees, of course. Earman and D. Fraser are perhaps clearest, suggesting that the IP's assumption of a global unitary intertwiner is problematic but that the local equivalence delivered by Haag-Ruelle appears to suffice. Klaczynski and Maiezza and Vasquez are a bit less clear. Insofar as they each hope for an explicit formalism, unlike canonical QFT, they appear sympathetic to axiomatic approaches. 
 Klaczynski only suggests that renormalization \textit{somehow} replaces the unitarity of the intertwiner. 

 Maiezza and Vasquez are agnostic regarding the question of which assumptions leading to Haag's theorem should be dropped. In their proposed model DRQFT, they drop the assumption of a Poincar\'e-invariant vacuum state, but they present this as only one possible and promising path forwards.\footnote{Thus they propose relaxing the same assumption as Seidewitz. However, their model is expressed predominantly in the language of canonical, non-axiomatic QFT. This model is intended to encode unknown physics at high energy scales. [Maiezza and Vasquez, personal communication, Feb. 2, 2024] } 

In a sense, Duncan is even less clear, albeit for a principled reason. Obviously respectful of axiomatization's utility, he nevertheless emphasizes again and again that the truly remarkable feature of QFT is the ``scale separation'' property of local field theory \cite[570-1]{duncan2012conceptual}. Building one's theory around this fact, as the effective Lagrangian approach of canonical QFT has done, would seem to undermine specification of any single set of assumptions since, e.g., symmetry structure can vary. Duncan, it would seem, cannot be clearer about what is denied.

Here our framework, by distinguishing Assessments and Repairs from Renovations and Extra-Haagian Outlooks, calls our attention to a whole host of choices underlying the superficial disagreement. In general, these concern the utility and feasibility of formally-driven approaches to expanding QFT. In particular, what is the point of axiomatization? what can we reasonably expect it to afford us? how does (should) it interface with other programs in QFT?



\subsection{Lessons for making progress}\label{subsec:calltoaction}
In the first instance, the framework we have provided constitutes guidance for highlighting the choices driving the more superficial disagreements. But this is not its most important guidance. More dramatically, we think the framework highlights a kind of question that philosophers of physics often overlook regarding the so-called foundations.

Many are tempted to ask, as we seem to do in the title: how foundational is Haag's theorem to QFT, \textit{really}? The adverb signals a temptation to think there is a universal extra-Haagian outlook on QFT. Of course, as we see it, there is no such outlook. Yes, there is wide agreement that QFT long ago secured much comfortable space to wander around. But we have not rested there\textemdash we wish to expand these rooms, and in so doing we soon face choices about how best to secure our foundations. Many putative problems have arisen from attempts to do so, Haag's theorem among them. Is evading Haag's theorem central to \textit{all} of these expansions? That is, does Haag's theorem need to be explicitly centered in \textit{all} expansion programs\textemdash say, by setting its evasion as a derivational goal, or by cementing involved concepts as essential to the representation of QFT? Surely not\textemdash even Hilbert recognized that such designations were relative to the proposed expansion and, at any rate, only tentative \cite{hilbert-axiomatisches}. If an expansion does not call for, say, securing global unitary equivalence between free and interacting fields, or for Poincar{\'e} covariance of the full state space, or unambiguous analytic continuation to finite coupling, or some other such security, Haag's theorem simply is not relevant.

Nevertheless, there are expansions for which Haag's theorem is relevant, so one can easily ask the same question in a localized form: how foundational is Haag's theorem for \textit{this} expansion? This seems a more sensible question, and in fact it seems to be asking several sensible questions. Most straightforwardly, it could be asking how radically the form of QFT needs to change in order to secure the proposed  expansion against the ravages of Haag's theorem. The radicalness of Kastner's and Seidewitz's expansions stand out here, suggesting that Haag's theorem is profoundly foundational for their expansions; on the opposite end, Duncan and Miller propose very little in the way of expansion, and Haag's theorem therefore does not appear foundational since \textit{no} change is necessary. Call questions like this \textit{formal questions} about how foundational Haag's theorem is. This seems to us to be how most philosophers of physics would approach grading how foundational a (cluster of) results is: the more `central' a \textit{concept} or \textit{statement} is within the theoretical apparatus in some formal sense\textemdash cashed out either semantically or syntactically\textemdash the more foundational it is.


Yet it is more difficult to answer the formal question about how foundational Haag's theorem is with respect to the other proposed expansions. This is because these expansion projects are actively probing the theoretical and mathematical possibilities for achieving their goals and, as such, Haag's theorem does not have as definitive a formal status. But this reveals a set of \textit{methodological questions} you might have in mind when asking about how foundational Haag's theorem is. Each of the expansion projects takes a position on the role of reasoning done with more abstract or idealized assumptions, more exacting characterizations of representations, or, in general, at farther remove from any particular, concrete modeling situation. Call this foundational \textit{reasoning}, to distinguish it from the (putatively foundational) concepts or statements of the last paragraph. Note that the two senses of foundational need not coincide: ultimately Haag's theorem may not require much adjustment to address (formal sense), but clearly the reasoning that led to it, and that expanded around it, is foundational in this methodological sense.

Rather than inquire about the formal status of Haag's theorem, we could inquire  about reasoning of its ilk. How fruitful was the reasoning that led to Haag's theorem? Is such reasoning still useful, so that it remains deserving of attention? Why or why not? Such questions are \textit{methodological} insofar as they primarily concern the nature of the \textit{growth} of human knowledge rather than the nature of its currently supposed contents. In more general terms, we might ask:
\begin{enumerate}
        \item \textbf{What role does (should) foundational reasoning play in progress in physics?} 
        \item \textbf{How does (should) foundational reasoning coordinate with non-foundational work?}  
        \item \textbf{And, what does (should) foundational reasoning even look like?} 
\end{enumerate}
Asking these methodological questions opens up a comparatively new avenue of inquiry for philosophers of physics by shifting our attention from the past\textemdash what is the formal-foundational status of Haag's theorem\textemdash to the future\textemdash how fruitful will Haag's theorem-related reasoning be.

And here is the most significant contribution of the framework: in attempting to decide on the status of our QFT knowledge vis-{\'a}-vis Haag's theorem, we have been dragged into a decision context where we must forecast the future of the various proposals. In Nickle's terms \cite{Nickles2006-NICHAC}, we have been dragged from philosophers' usual concern for \textit{epistemic} appraisal\textemdash is it well-confirmed or truth-conducive?\textemdash to a concern for \textit{heuristic} appraisal\textemdash does it have a good ``opportunity profile''? Our positions is that, when it comes to thinking hard about our best physical theories, being dragged in this way is neither new nor unwelcome. Indeed, as Streater and Wightman humorously (but correctly) observe, many physicists felt that the development of a more rigorous \textit{quantum} field theory was supremely ill-advised; cynics had even ``compared them to the Shakers, a religious sect of New England who built solid barns and led celibate lives, a non-scientific equivalent of proving rigorous theorems and calculating no cross sections'' \cite[1]{streater2000pct}. That is, the ``opportunity profile'' of (rigorous) foundational reasoning about quantum field theory was considered weak.

The early negative assessment of rigorous approaches to QFT was born of beliefs about the comparatively greater promise of investigating the \textit{classical} foundations of field theory versus the foundations of a \textit{quantum} field theory. And yet, many of the results that give comfort in the face of Haag's theorem\textemdash Reed and Simon's Theorem X.46 and the contemporary understanding of LSZ among them\textemdash were at least informed by the quantum Shakers above who aimed to ``kill it or cure it'', as Streater and Wightman put it \cite{streater2000pct}. So, does this vindicate the kind of foundational reasoning Shaker-physicists, such as Haag or Wightman, engaged in? And if it does, does it tell us anything about the auspiciousness of such reasoning today? These are the kinds of methodological questions to which our attention has been drawn by the framework's centering of scientific reasoning rather than individual concepts or claims.


We think asking methodological questions about foundational \textit{reasoning} better equips philosophers for engaging with ongoing inquiry. After all, our respect of science does not boil down only to our trust of the propositional claims scientists make about what the world is like. Rather, we presumably trust these claims because we trust science is composed of experts ``knowing \textit{how} to conduct research, including HA [heuristic assessment] of the various proposals for fruitful inquiry'' \cite[84]{Nickles2015}. In particular, we trust that they have the expertise necessary to decide which theorems or concepts it would be fruitful to place at the foundation of a theory.

$\dots$Or, at least, we trust they have the expertise necessary to do so \textit{in the long run}. But, as Haag's theorem illustrates nicely, this is where it matters that we are forced to make choices: \textit{how} do we get to this ``in the long run''? Consider a small-scale version of this problem we face with Haag's theorem. Many folks agree that finding an unambiguous structure underlying canonical QFT's perturbative calculations would be significant progress, and some are willing to set this as a goal in itself. Insofar as finding this structure would seem to necessitate identifying the analytic structure to which perturbative calculations are approximations, the phenomena of Haag's theorem are directly implicated. This makes Haag's theorem relatively foundational in the formal sense. Nevertheless, this does \textit{not} mean that foundational reasoning like that which led to Haag's theorem is the most fruitful \textit{kind} of foundational reasoning. Indeed, it is even possible that identifying the structure we are after will come from experimental serendipity and not reasoning we presently recognize as foundational.

Nevertheless, serendipity is not an especially defensible \textit{plan} for pursuing future physics, which leaves us to decide among the kinds of foundational reasoning currently on offer. Supposing we have set ourselves the goal of specifying an unambiguous structure underlying canonical QFT's perturbative calculations, how should we go about doing so? As the pluralist response we began this section with emphasized, it seems likely that a mixed, rather than pure, strategy will be appropriate. We simply do not know definitively which program(s) will bear fruit. But the question is, \textit{which} mixed strategy? This is where it would be helpful to compare and contrast the extant programs, considering especially how they complement or compete against alternatives as well as the reasons given for or against their auspiciousness. Should these reasons be historical or abstracted away from the particulars of the problem of, say, Haag's theorem, there is all the more reason for philosophers of science to get involved.

It may seem that we are calling for a retread of the path worn by the so-called D. Fraser-Wallace debate \cite{Wallace2006_naivite,wallace2011taking,fraser2011take,fraser2009quantum} (see also \cite{koberinski2016reconciling,koberinski_haag} for reception of the debate). There is something to this charge. Both D. Fraser and Wallace have views on the auspiciousness of various proposed expansions of QFT, and later work from both comes to consider questions like those we pose above more directly (e.g., \cite{fraser2022justifying,Wallace2020framework}). However, the debate as it exists in print is ultimately concerned with what form of QFT is worth subjecting to philosophical methods of interpretation, as well as what those methods should be. It is not (explicitly) about the nature or role of foundational reasoning \textit{for physics}, let alone about the various goals physicists and mathematicians have set for themselves, the auspiciousness of the programs they champion, or the reasons they give for such assessments of auspiciousness. 

The present work is better recognized as calling for extending a path concerned with the scientific value of different kinds of foundational reasoning. \cite[14]{koberinski2023generalized}, for example, argues that the axiomatic-adjacent approach of framework generalizations is reasonable insofar as it ``expand[s] the domain of indirect tests able to find evidence of new physics.'' In another vein, \cite{MITSCH2022Hilbert-style} argues that no-go theorems can anchor mathematical research programs in serious interpretive issues. \cite{FGMhowto} argue, moreover, that no-go theorems can serve as helpful guides to building adequate models. We are suggesting that further work on the utility of the various conceptual apparatuses, methods, exploratory techniques, etc., on offer could benefit scientific reasoning about what projects to take up and, thereby, the choices made. Figure \ref{fig:disciplines} provides a loose disciplinary sort of some of the programs we encountered in our discussion of Haag's theorem; each is arguably engaged foundational reasoning, which makes them good targets for further investigation along this path. In doing so, we should keep two things in mind. First, QFT is not a methodological island: there is a rich history of physics\textemdash and indeed of other sciences!\textemdash on which we can draw as we adumbrate the virtues and vices of the various reasoning tools in our tool belt. And second, we must be ready to adapt when we inevitably face new goals, pragmatic constraints, or tool adaptations, and the like.

\vspace{1cm}

\tikzstyle{Question} = [rectangle, text width=15.5em, text centered, rounded corners, minimum height=1.5em, draw=black, fill=orange!30]
\tikzstyle{Question2} = [rectangle, text width=10.5em, text centered, rounded corners, minimum height=1.5em, draw=black, fill=orange!30]
\tikzstyle{Answer} = [rectangle, text width=15.5em, text centered, rounded corners, minimum height=1.5em, draw=black,fill=green!30]
\tikzstyle{Answer2} = [rectangle,  text width=5.5em, text centered, rounded corners, minimum height=1.5em, draw=black, fill=green!30]
\tikzstyle{Author} = [rectangle,  minimum width=3cm, minimum height=1cm, text centered, draw=black, fill=blue!30]

\pagebreak

\begin{figure}[htp]
\centering
\begin{tikzpicture}[thick,scale=1.3, every node/.style={scale=0.7}]

\node (Game) [Question] at (1, 0.5) {In most general sense, what ``game'' are we playing?};
\node (SRNRQM) [Answer] at (-2, -1) {Investigate the union of SR and NRQM.};
\node (besttheory) [Answer] at (4, -1) {Explain/assess/improve \\ theoretical physics in practice.};

\node (logstruc) [Question2] at (-4, -2.5) {What is the logical structure of the foundations of QFT?};
\node (ontol) [Question2] at (0, -2.5) {What is the  ontology of `particle' physics?};

\node (EF) [Author] at (-4, -3.5) {Earman \& D. Fraser};
\node (PI) [Author] at (0, -3.5) {Theory Interpreters};

\node (framework) [Question] at (4, -4) {Does Haag's theorem call for radical revisions to QFT?};
\node (new) [Answer2] at (-2, -6) {Yes};
\node (keep) [Answer2] at (4, -6) {No};

\node (S) [Author] at (-3, -7) {Seidewitz};
\node (K) [Author] at (-1, -7) {Kastner};

\node (renorm) [Question] at (2, -8) {How do current perturbative renormalization techniques circumvent Haag's theorem?};

\node (Poin) [Answer] at (6, -9.5) {Regularization breaks Poincar\'e invariance.};
\node (IPsaved) [Answer] at (6, -10.5) {Existing practice with IP is salvaged.};
\node (DM) [Author] at (6, -11.5) {Duncan \& Miller};

\node (Unit) [Answer] at (2, -9.5) {Renormalization entails a non-unitarity intertwiner.};
\node (IPbroken) [Answer] at (2, -10.5) {We need to replace the IP.};
\node (KI) [Author] at (2, -11.5) {Klaczynski};

\node (perturb) [Answer] at (-3, -9.5) {Do not specify. However, the problems of Haag's theorem re-emerge when we consider the whole perturbative expansion.};
\node (newid) [Answer] at (-3, -10.5) {Will require new physical insights.};
\node (MV) [Author] at (-3, -11.5) {Maiezza \& Vasquez};

\Edge[](Game)(SRNRQM)
\Edge[](Game)(besttheory)

\Edge[](SRNRQM)(logstruc)
\Edge[](SRNRQM)(ontol)

\Edge[](logstruc)(EF)
\Edge[](ontol)(PI)

\Edge[](besttheory)(framework)
\Edge[](framework)(new)
\Edge[](framework)(keep)
\Edge[](new)(S)
\Edge[](new)(K)
\Edge[](keep)(renorm)

\Edge[](renorm)(Poin)
\Edge[](renorm)(Unit)
\Edge[](renorm)(perturb)

\Edge[](Poin)(IPsaved)
\Edge[](IPsaved)(DM)

\Edge[](Unit)(IPbroken)
\Edge[](IPbroken)(KI)

\Edge[](perturb)(newid)
\Edge[](newid)(MV)
\end{tikzpicture}
\caption{A mapping of the responses to Haag's theorem according to the driving motivations.}
\label{fig:david}
\end{figure}
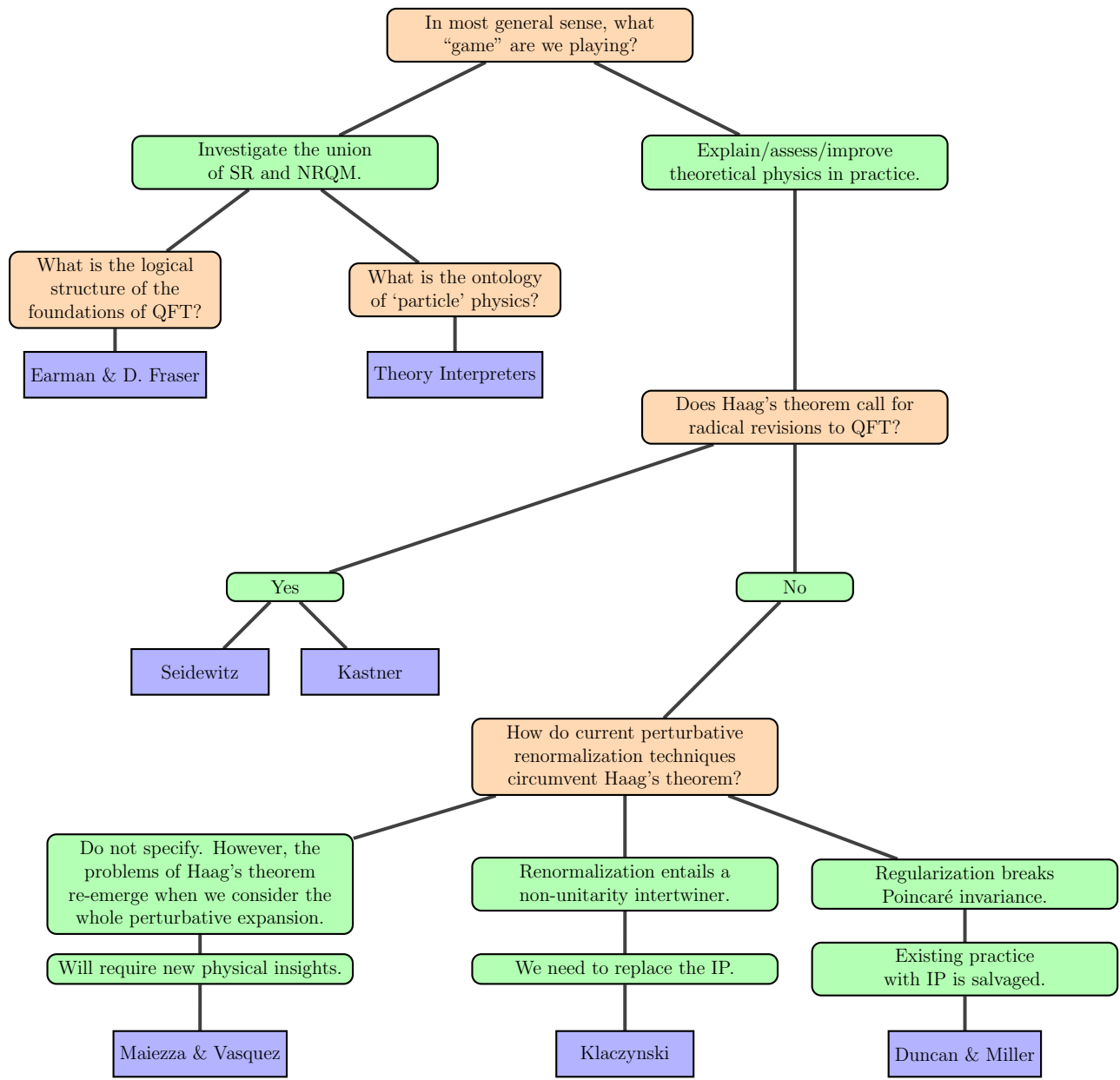

\pagebreak

\tikzstyle{MathAnswer} = [circle,  minimum width=3cm, text centered, draw=black, fill=blue!30]
\tikzstyle{ExpAnswer} = [circle,  minimum width=3cm, text centered, draw=black, fill=green!30]
\tikzstyle{PhilAnswer} = [circle,  minimum width=3cm, text centered, draw=black, fill=red!30]

\tikzstyle{MathArea} = [rectangle, minimum width=3cm, text centered, draw=black, fill=blue!20]
\tikzstyle{ExpArea} = [rectangle, minimum width=3cm, text centered, draw=black, fill=green!20]
\tikzstyle{PhilArea} = [rectangle, minimum width=3cm, text centered, draw=black, fill=red!20]

\tikzstyle{EnergyAuthor} = [rectangle, rounded corners, minimum width=3cm, minimum height=1cm, text centered, draw=black, fill=yellow!10]

\tikzstyle{EnergyAuthor} = [rectangle, rounded corners, minimum width=3cm, minimum height=1cm, text centered, draw=black, fill=yellow!10]

\begin{figure}[!htbp]
\centering
\begin{tikzpicture}[thick,scale=1.18, every node/.style={scale=0.7}]

\node (Energy) [Question] at (0,0.5) {What kind of work do the repairs, maintenance and/or renovations call for?};

\node (Math) [MathAnswer] at (-3.75, -1.5) {Mathematics};
\node (Exp) [ExpAnswer] at (0, -6) {Physics};
\node (Phil) [PhilAnswer] at (3.75, -1.5) {Philosophy};

\Edge[](Energy)(Math)
\Edge[](Energy)(Exp)
\Edge[](Energy)(Phil)

\node (Renorm) [MathArea] at (-2, -3) {Renormalization};
\node (Models) [MathArea] at (-6, -3) {Models};

\Edge[](Math)(Renorm)
\Edge[](Math)(Models)

\node (EF1) [EnergyAuthor] at (-6, -5.25) {Earman \& D. Fraser};
\node (Seid) [EnergyAuthor] at (-5, -4.25) {Seidewitz};

\node (Klac) [EnergyAuthor] at (-1.5, -4.25) {Klaczynski};
\node (MV1) [EnergyAuthor] at (-3.5, -5.25) {Maiezza \& Vasquez};

\Edge[](Renorm)(Klac)
\Edge[](Renorm)(MV1)
\Edge[](Models)(EF1)
\Edge[](Models)(Seid)

\node (Betas) [ExpArea] at (-4.5, -8) {Effective Field Theory};
\node (Insight) [ExpArea] at (1.5, -8) {Physical Insight};
\node (Cutoff) [ExpArea] at (-1.5, -8) {Cutoff Invariance};
\node (Toolkit) [ExpArea] at (4.5, -8) {New Theoretical Commitments};

\Edge[](Exp)(Betas)
\Edge[](Exp)(Cutoff)
\Edge[](Exp)(Insight)
\Edge[](Exp)(Toolkit)

\node (DM) [EnergyAuthor] at (-4.5, -9) {Duncan \& Miller};
\node (Dunc) [EnergyAuthor] at (-1.5, -9) {Duncan};
\node (MV2) [EnergyAuthor] at (1.5, -9) {Maiezza \& Vasquez};
\node (Kast) [EnergyAuthor] at (4.5, -9) {Kastner};

\Edge[](Betas)(DM)
\Edge[](Cutoff)(Dunc)
\Edge[](Insight)(MV2)
\Edge[](Toolkit)(Kast)

\node (Interpretation) [PhilArea] at (2, -3) {Traditional Interpretation};
\node (InterpMeth) [PhilArea] at (6, -3) {New Interpretive Methods};

\Edge[](Phil)(Interpretation)
\Edge[](Phil)(InterpMeth)

\node (EF2) [EnergyAuthor] at (1.5, -4.25) {Earman \& D. Fraser};
\node (HC) [EnergyAuthor] at (3.5, -5.25) {Halvorson \& Clifton};

\node (Wall) [EnergyAuthor] at (6, -5.25) {Wallace};
\node (Bain) [EnergyAuthor] at (5, -4.25) {Bain};

\Edge[](Interpretation)(EF2)
\Edge[](InterpMeth)(Wall)
\Edge[](Interpretation)(HC)
\Edge[](InterpMeth)(Bain)




\end{tikzpicture}
\caption{Disciplinary homes of proposed renovations projects.}
\label{fig:disciplines}
\end{figure}
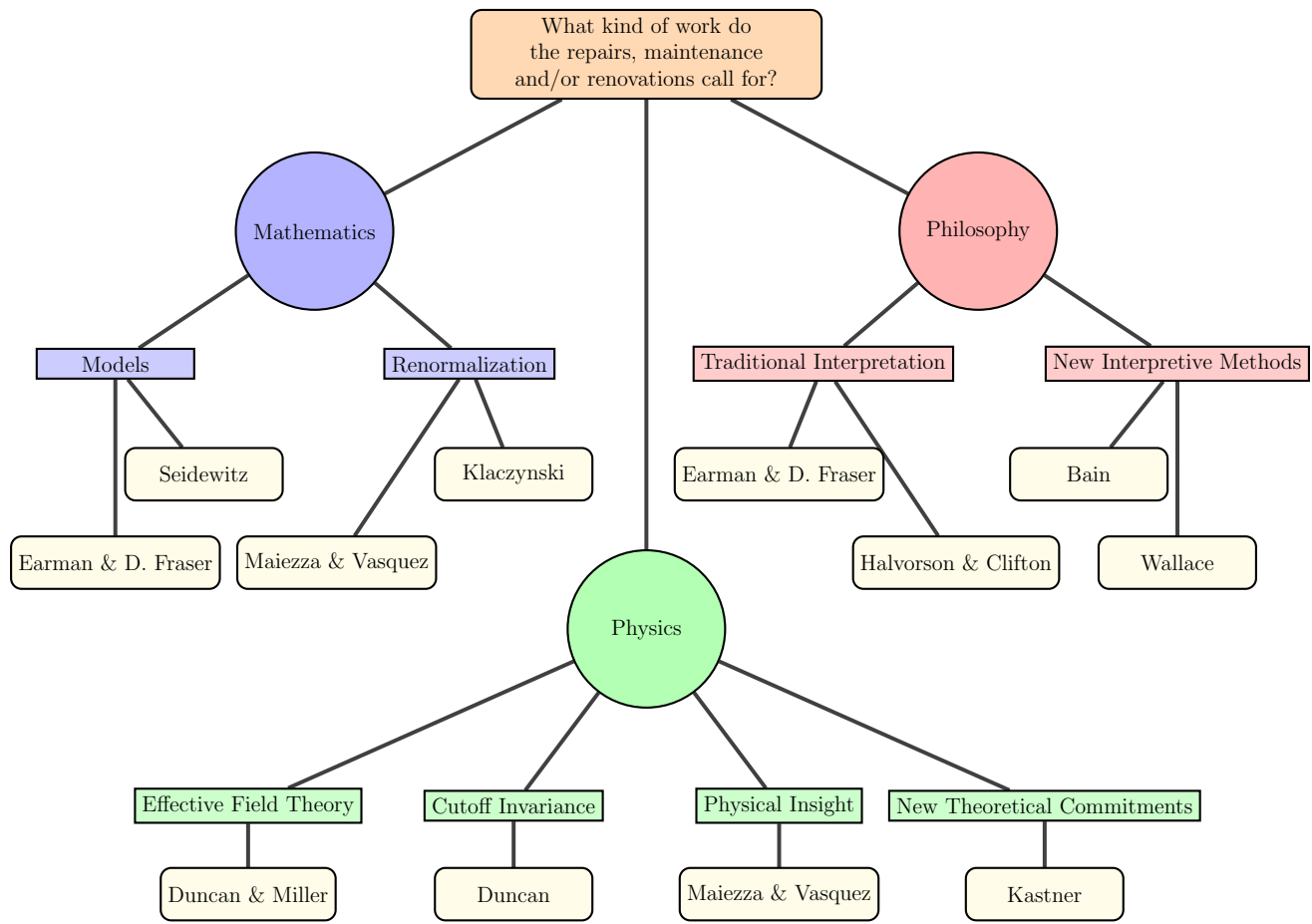

\section{Conclusion}\label{sec:conclusion}
Haag's theorem cries out for explanation and critical assessment: it sounds the alarm that something is (perhaps) not right in how QFT has been built. 
Viewpoints as to the precise nature of the problem (assessment), the appropriate solution (repair), and subsequently called-for developments in areas of physics, mathematics, and philosophy (maintenance or renovation) differ widely. Moreover, the extant literature presenting these differing views constitutes a complex  mix of arguments at cross-purposes, generating substantive confusion as to the precise issues to be addressed. In this paper, we have worked to address this confusion by cataloging and comparing a number of these viewpoints. We have developed and then deployed a framework for accomplishing that task. The application of our framework reveals each authors' background disciplinary and methodological commitments and expectations of QFT\textemdash what we have termed extra-Haagian outlooks. 

We have argued that these extra-Haagian outlooks are the primary driving forces in the various debates on Haag's theorem, and that these furthermore shape the different viewpoints as to what are the most important future projects regarding QFT, as well as on the appropriate methods for accomplishing those projects. We have urged stakeholders in the future of QFT to reconsider their own expectations for QFT, reflecting seriously on meta-level questions regarding the nature of foundational work in physics. What role does (should) foundational work play in the progress in physics? How does (should) foundational work coordinate with non-foundational work? What does (should) foundational work even look like? In calling attention to these questions, we encourage stakeholders\textemdash especially philosophers\textemdash to assess the significance of the reasoning surrounding theorems like Haag's rather than the theorems themselves. This, we believe, promises to deliver more useful guidance by engaging with the reasoning physicists and mathematicians are already doing as they make difficult choices about how to proceed with the construction of QFT.



\vspace{2cm}


%


\pagebreak

\renewcommand\refname{Bibliography}
\bibliographystyle{apalike}
\bibliography{HaagBibliography.bib}{}

\end{document}